\theoremstyle{plain}
\newtheorem{theorem}{Theorem}
\newtheorem{lemma}{Lemma}
\newtheorem{proposition}{Proposition}
\theoremstyle{definition}
\newtheorem{assumption}{Assumption}
\newtheorem{problem}{Problem}
\theoremstyle{remark}
\newtheorem{remark}{Remark}
\DeclareMathOperator{\im}{im}
\newcommand{\norm}[1]{\ensuremath{\left\| #1 \right\|}}
\newcommand{\calA}{\ensuremath{\mathcal{A}}}
\newcommand{\calC}{\ensuremath{\mathcal{C}}}
\newcommand{\calD}{\ensuremath{\mathcal{D}}}
\newcommand{\calO}{\ensuremath{\mathcal{O}}}
\newcommand{\calQ}{\ensuremath{\mathcal{Q}}}
\newcommand{\calS}{\ensuremath{\mathcal{S}}}
\newcommand{\bA}{\ensuremath{\boldsymbol{A}}}
\newcommand{\bB}{\ensuremath{\boldsymbol{B}}}
\newcommand{\bC}{\ensuremath{\boldsymbol{C}}}
\newcommand{\bQ}{\ensuremath{\boldsymbol{Q}}}
\newcommand{\bze}{\ensuremath{\boldsymbol{\zeta}}}
\newcommand{\ups}{\ensuremath{\boldsymbol{\Upsilon}}}
\newcommand{\s}{\ensuremath{\star}}
\newcommand{\DA}{\ensuremath{\Delta A}}
\newcommand{\DB}{\ensuremath{\Delta B}}
\newcommand{\R}{\ensuremath{\mathbb R}}
\newcommand{\EP}{\hspace*{\fill} $\square$\smallskip\noindent}
\newcommand{\bse}{\begin{subequations}}
	\newcommand{\ese}{\end{subequations}}
\def\be{\begin{equation}}
\def\ee{\end{equation}}
\newcommand{\bbm}{\begin{bmatrix}}
	\newcommand{\ebm}{\end{bmatrix}}
\title{\LARGE \bf
	Preserving Privacy in Cloud-based Data-Driven Stabilization
}
\author{Teimour Hosseinalizadeh, Nima Monshizadeh
	\thanks{The authors are with the Engineering and Technology
		Institute, University of Groningen, Groningen, The Netherlands. Email: {\tt\{t.hosseinalizadeh,  n.monshizadeh\}@rug.nl}. \\
        This work has been submitted to the IEEE for possible publication. Copyright may be transferred without notice, after which this version may no longer be accessible.
}}
\begin{document}
	
	\maketitle
	\thispagestyle{empty}
	\pagestyle{empty}
	
	\begin{abstract}
		In the recent years, we have observed three significant trends in control systems:  a renewed interest in  data-driven control design, the abundance of cloud computational services and the importance of preserving privacy for the system under control. Motivated by these factors, this work investigates privacy-preserving outsourcing for the design of a stabilizing controller for unknown linear time-invariant systems. 
		The main objective of this research is  to  preserve the privacy for the system dynamics by designing an outsourcing mechanism.
		To achieve this goal, we propose a scheme that combines transformation-based techniques and robust data-driven control design methods.
		The  scheme preserves the privacy of both the open-loop and closed-loop system matrices while stabilizing the system under control.
		The scheme is applicable to both data with and without disturbance and is lightweight in terms of computational overhead.
		Numerical investigations for a  case study  demonstrate the impacts of our mechanism and its role in hindering malicious adversaries from achieving their goals.
	\end{abstract}
	\begin{keywords}
		Privacy, Cloud-based Control, Data-driven control
	\end{keywords}

	\section{Introduction}\label{sec:intro}
	Cloud-based services  enable access to computing resources such as networks, servers, storage and  applications that can be rapidly provisioned and released with minimal management effort or service provider interactions \cite{mell2011nist}. 
	Researchers in control systems have documented the enhancement of conventional control technologies by leveraging cloud-based services across numerous industries.  
	For instance, cloud-based control offers a $53\%$ energy saving and a $36\%$ improvement in thermal comfort for an office building \cite{Buildingdrgovna2020cloud}, reduces electrical energy consumption in geothermal fields \cite{GeoThermalStoffel2022cloud} and lowers peak demand during the cooling season in public schools \cite{K12SchoolWoo2023first}. 
	Nonetheless, numerous security issue arises in cloud-based control due to the nature of cloud computing, one of which is the unauthorized access of could to privacy-sensitive  parameters of the system and controller \cite{xia2022brief}.
	
	\par \textit{Related studies:} 
	Since the advent of privacy concerns in control systems \cite{le2013differentially}, preserving privacy has become the focus of many studies using a wide range of methods and for a wide range of applications. For instance, \cite{xu2024cloud} proposes a cloud-based load frequency control in power systems using homomorphic encryption, while \cite{pan2022privacy} employs encryption to preserve privacy in platooning control of vehicular systems. Additionally, \cite{chen2024privacy} addresses the problem of distributed economic dispatch in microgrids using edge-based additive perturbations, and \cite{qi2024full} explores outsourcing controllers for switched LPV systems using differential privacy. For a  review of privacy-preserving methods in control systems, we refer to the surveys in \cite{schluter2023brief} and \cite{han2018privacy}
	
	\par In the literature on privacy for control systems there have been numerous studies with focus on preserving the privacy for the \textit{system model or parameters} against adversaries. For instance, 
	\cite{le2013privacy} considers the problem of privacy-preserving releasing of a dynamic model describing the aggregate input-output dynamics of subsystems,
	\cite{bottegal2017preserving} motivated by the model of the system as a trade secret  designs optimal filters for constructing additive noise, \cite{ziemann2020parameter} considers preserving privacy for the state matrix in a linear quadratic problem, \cite{prakash2024privacy} studies the problem of securely outsourcing the solution of algebraic Riccati equations  
	while \cite{katewa2019differential} considers designing synthetic noise for a multiagent systems monitored by a control center and in the presence of an intruder. 
	The main difference between these studies and the current work is that in these papers, the system model is known when designing a privacy-preserving mechanism.  
	Furthermore, with the exception of \cite {prakash2024privacy}, these studies utilize noise which its amount is adjusted based on differential privacy or Fisher information matrix \cite{ziemann2020parameter} to guarantee privacy against adversaries. This introduces the  well-known privacy-performance trade-off.

	\par With the ever-increasing complexities of  dynamic systems, the controller or privacy-preserving mechanism designer  may not have the accurate system model to accomplish their tasks.  This has been a key motivation (among others) for the recent renewed interest in data-driven control, which uses input and output trajectories of a system rooted in behavioral framework \cite{willems2005note}  (see \cite{de2019formulas} and \cite{coulson2019data}).
	The literature on secure control has also adopted this framework for which we refer to
	\cite{russo2021poisoning,alisic2023model,anand2023data}.
	\par This brings us to the second class of studies with the focus on preserving privacy for the system model, which this research also belongs to,  and that is the case where the designer of the privacy-persevering mechanism \textit{does not know}  the model of the system.  For instance, 
	\cite{suh2021sarsa}  considers a cloud-based reinforcement learning (RL) where updating the value function is outsourced to the cloud, 
	\cite{alexandru2020towards} studies cloud-based data-driven  model predictive control  inspired by the behavioral framework,  
	and \cite{mukherjee2021secure}  presents a secure RL-based control method for unknown linear time-invariant (LTI) systems where a dynamic camouflaging technique is used to misguide the eavesdropper.  
	The studies \cite{suh2021sarsa} and \cite{alexandru2020towards}  rely on CKKS as a Fully Homomorphic Encryption (FHE) to prevent the cloud from inferring  sensitive information. While FHE provides strong privacy guarantees it can introduce additional computational overhead and possibly delays as they report in \cite{suh2021sarsa}.     
	The paper \cite{mukherjee2021secure} draws on dynamical systems theory  to hide sensitive information, however it only considers an eavesdropper in the learning phase of the algorithm and permits the cloud to  access  privacy-sensitive data in the design phase. This approach differs significantly from our problem structure, as well as from the studies in \cite{suh2021sarsa} and \cite{alexandru2020towards}, where the cloud is not trusted 

	\par This research aims to design a data-driven stabilizing  controller for an unknown LTI system by following the paradigm presented in \cite{de2019formulas}.  
	The additional constraint we impose is the preservation of privacy for both open-loop and closed-loop system matrices when the controller design is performed by the cloud.		To achieve this goal, we propose a scheme that combines transformation-based methods and robust controllers design approaches. The scheme provides a privacy budget for the closed-loop system allowing the system designer to conceal the closed-loop system, even in the presence of unknown disturbances in the system's dynamics.  
	Furthermore, through extensive numerical simulations we demonstrate the strong effects of the proposed scheme in preventing active adversaries from performing bias injection attacks on the system. 
	To summarize, the contributions of this research are as follows:
	\begin{enumerate}
		\item Preserving privacy for  both open-loop and closed-loop system matrices while designing a stabilizing controller
		\item Providing a privacy budget for the closed-loop system  
		\item Guaranteeing  privacy for systems in the presence of unknown disturbances 
	\end{enumerate}
	Additionally, the scheme does not degrade the controller's performance,  unlike noise-based methods, and  is lightweight in terms of computational complexity.  
	
	\par \textit{Notation:}
	We denote the identity matrix of size $n$ by $I_n$, the zero matrix of size $n \times m$ by $0_{n \times m}$ and drop the subscript whenever the dimension is clear from the context. For a matrix $A \in \R^{n \times m}$,  we denote its induced 2-norm, and Moore-Penrose inverse by  $\norm{A}$, and $A^\dagger$, respectively. 
	By $A \succ0$ ($\succeq 0$), we mean $A$ is a positive (semi-) definite matrix and by $A^{1/2}$ we denote its unique positive (semi-) definite square root. We also denote $\begin{bsmallmatrix} A & B^\top \\ B & C\end{bsmallmatrix}$ by $\begin{bsmallmatrix} A & \s \\ B & C\end{bsmallmatrix}$.
	\par \textit{Organization:} The rest of this paper is organized as follows: Section \ref{sec:pre-prob-setup} presents preliminaries and problem formulation, Section \ref{sec:privacy-clean-sys} provides a privacy-preserving scheme for systems without disturbance while Section \ref{sec:privacy-disturb-sys} extends the scheme to systems with disturbance. Numerical simulations and concluding remarks are provided in Sections \ref{sec:case_study} and \ref{sec:conclusion}, respectively.

	\section{Preliminaries and Problem setup}\label{sec:pre-prob-setup}
	In this section we present the preliminaries and problem setup.
	\subsection{Preliminaries} 
	We frequently encounter a set $\Sigma$ of systems 
	parameterized through a quadratic matrix inequality (QMI) 
	as
	\be\label{eq:ellips-lemma}
	\begin{aligned}
		\Sigma :=\{\bbm {A} & {B}\ebm^\top :={Z}\mid {\bC}+{\bB}^{\top}{Z}+{Z}^{\top}{\bB}+ {Z}^{\top}{\bA}{Z} \preceq 0 \},
	\end{aligned}
	\ee
	where $\bA \succ 0 $ and ${\bB}^{\top}{\bA}^{-1}{\bB}-{\bC} \succeq 0$.
	The matrices $\bA$, $\bB$ and $\bC$ are generally known and dependent on the data collected from a system. 
	
	\par We are particularly interested in a controller $K$ that render all matrices $A+BK$, with $[A \; B]^\top\in \Sigma$,  Schur stable. 
	A necessary and sufficient conditions for the existence and design (in case of existence) of such a stabilizing controller is provided by the following lemma.

	\begin{lemma}\label{lem:stability-QMI}
		Let $\Sigma$ determine the set of linear systems given by the pair $(A, B)$. 
		Then there exist the matrices $K$ and $P\succ0$ such that
		$$
		(A + BK)P(A + BK)^\top - P \prec 0\quad  \text{for all} \bbm A & B\ebm^\top \in \Sigma
		$$
		if and only if there exist $Y$ and $P\succ 0$ such that 
		\be\label{eq:final_LMI-for_QMI}
		\begin{aligned}
			\bbm 
			-P -\bC & \s & \s \\
			0 & -P & \s \\
			{\bB} & \bbm
			P \\ Y\ebm & -{\bA}
			\ebm \prec 0.
		\end{aligned}
		\ee
		If \eqref{eq:final_LMI-for_QMI}  is feasible, then $K=YP^{-1}$.
	\end{lemma}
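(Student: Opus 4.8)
The plan is to read \eqref{eq:final_LMI-for_QMI} as a \emph{lossless} S-procedure certificate for the robust Lyapunov inequality, the losslessness being supplied by the matrix S-lemma for quadratic matrix inequalities. First I would linearize the closed loop by the change of variables $Y=KP$ and set $U:=\bbm P\\ Y\ebm$, so that for every $(A,B)$ with $Z:=\bbm A & B\ebm^\top$ one has $(A+BK)P = Z^\top U$ and hence
$$(A+BK)P(A+BK)^\top - P \;=\; -\,\bbm I\\ Z\ebm^\top \Pi \,\bbm I\\ Z\ebm, \qquad \Pi:=\bbm P & 0\\ 0 & -UP^{-1}U^\top\ebm .$$
Likewise $Z\in\Sigma$ means exactly $\bbm I\\ Z\ebm^\top \Theta \,\bbm I\\ Z\ebm \succeq 0$ with $\Theta:=-\bbm \bC & \s\\ \bB & \bA\ebm$. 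Thus the statement reduces to: there are $P\succ0$ and $U$ (equivalently $K=YP^{-1}$) with $\bbm I\\ Z\ebm^\top \Pi \,\bbm I\\ Z\ebm \succ 0$ for all $Z$ satisfying $\bbm I\\ Z\ebm^\top \Theta \,\bbm I\\ Z\ebm \succeq 0$, if and only if \eqref{eq:final_LMI-for_QMI} is feasible.

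Next I would record a purely algebraic equivalence: taking the Schur complement of \eqref{eq:final_LMI-for_QMI} with respect to its $(2,2)$ block $-P\prec0$, then changing sign and applying the congruence $\diag(I,-I)$, collapses \eqref{eq:final_LMI-for_QMI} to $\bbm P+\bC & \bB^\top\\ \bB & \bA-UP^{-1}U^\top\ebm = \Pi-\Theta \succ 0$, and each step is reversible because $P\succ0$. With this the ``if'' direction is immediate S-procedure sufficiency and needs nothing deep: since the leading block of $\bbm I\\ Z\ebm$ is the identity, $\Pi-\Theta\succ0$ forces $\bbm I\\ Z\ebm^\top \Pi \,\bbm I\\ Z\ebm \succeq \bbm I\\ Z\ebm^\top \Theta \,\bbm I\\ Z\ebm \succeq 0$ with the first gap positive definite, hence $(A+BK)P(A+BK)^\top \prec P$ for all $\bbm A & B\ebm^\top\in\Sigma$ with $K=YP^{-1}$.

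The substantive step is the converse. Given $P\succ0$ and $K$ achieving the robust contraction, set $Y=KP$; by the first paragraph $\bbm I\\ Z\ebm^\top \Pi \,\bbm I\\ Z\ebm \succ 0$ holds throughout the QMI set. Here I would invoke the matrix S-lemma: because $\bA\succ0$, the block $\Theta_{22}=-\bA$ is negative definite, so $\Sigma$ is a bounded ``matrix ellipsoid'', and the standing assumption $\bB^\top\bA^{-1}\bB-\bC\succeq0$ makes it nonempty (indeed $Z_0:=-\bA^{-1}\bB$ lies in it) --- this is precisely the regularity under which the S-procedure is lossless. The lemma then returns a scalar $\alpha\ge0$ with $\Pi-\alpha\Theta\succ0$; one checks $\alpha\neq0$, since $\alpha=0$ would make the lower-right block $-UP^{-1}U^\top$ of $\Pi$ positive definite, which is impossible. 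Finally, as $\Pi$ is positively homogeneous of degree one in $(P,Y)$ while both $P\succ0$ and $K=YP^{-1}$ are invariant under $(P,Y)\mapsto(tP,tY)$, rescaling by $t=1/\alpha$ normalizes $\alpha$ to $1$, giving $\Pi-\Theta\succ0$; running the Schur/congruence manipulation of the second paragraph backwards recovers \eqref{eq:final_LMI-for_QMI}, with $K=YP^{-1}$. The one delicate point I expect is making the standing hypotheses on $\bA,\bB,\bC$ line up with the exact regularity conditions of whichever version of the matrix S-lemma is cited (boundedness versus a strict Slater point); everything else is bookkeeping with Schur complements and rescalings.
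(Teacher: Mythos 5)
Your Schur-complement bookkeeping is correct: with $U=\bbm P\\ Y\ebm$, the LMI \eqref{eq:final_LMI-for_QMI} is indeed equivalent (for $P\succ0$) to $\Pi-\Theta\succ0$ in your notation, and your ``if'' direction, as well as the rescaling argument $(P,Y)\mapsto(tP,tY)$ and the observation that $\alpha\neq0$, are sound. Note also that your route is genuinely different from the paper's: the paper does not prove the lemma but cites \cite[Thm.~1]{bisoffi2022data}, whose proof goes through Petersen's lemma after parametrizing $\Sigma$ as $Z=\bze+\bA^{-1/2}\ups\,\bQ^{1/2}$ with $\ups^\top\ups\preceq I$ and $\bQ:=\bB^\top\bA^{-1}\bB-\bC$ (exactly the parametrization this paper reuses in its appendix); that argument is lossless under precisely the standing hypotheses $\bA\succ0$ and $\bQ\succeq0$, with no interior-point requirement.

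The genuine gap is in your converse. The lossless strict matrix S-lemma you invoke does not hold under mere nonemptiness and boundedness of $\Sigma$: it requires a generalized Slater point, i.e.\ some $\bar Z$ with $\bC+\bB^\top\bar Z+\bar Z^\top\bB+\bar Z^\top\bA\bar Z\prec0$, which under $\bA\succ0$ is equivalent to $\bQ\succ0$. The lemma only assumes $\bQ\succeq0$, and the degenerate case is not a corner case in this paper: in the noiseless setting of Proposition~\ref{prp:clean-data-open-loop}, the matrices \eqref{eq:clean-data-mat} give $\bA=I$, $\bC=\bB^\top\bB$, hence $\bQ=0$ and $\Sigma$ is a singleton with empty interior, so the S-lemma with Slater does not apply exactly where the result is used (this rank-deficient regime is why matrix Finsler-type lemmas exist in the literature). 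A second, smaller, glossed step: even with Slater, the strict matrix S-lemma certifies $\Pi-\alpha\Theta\succeq\diag(\beta I_n,0)$ with $\beta>0$, not $\Pi-\alpha\Theta\succ0$; you must nudge $\alpha$ up by a small $\epsilon>0$, the Schur complement then being $\beta I-\epsilon\bQ\succ0$ thanks to $\bA\succ0$. Both issues are repairable --- e.g.\ enlarge the set by replacing $\bC$ with $\bC-\delta I$ (which creates a Slater point), use compactness of $\Sigma$ (from $\bA\succ0$) and strictness of the Lyapunov inequality to keep the same $(P,K)$ for small $\delta$, apply the S-lemma there, and recover the original certificate since $\Pi-\alpha\Theta=(\Pi-\alpha\Theta_\delta)+\alpha\delta\,\diag(I_n,0)$ --- but as written the ``only if'' direction is incomplete; alternatively, follow the Petersen's-lemma route of the cited reference, which covers $\bQ\succeq0$ directly.
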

	\begin{proof}
		The proof is  given in  \cite[Thm.1]{bisoffi2022data}. 
	\end{proof}
	
	{In this manuscript, we ask for solving LMIs in the form of \eqref{eq:final_LMI-for_QMI}  by specifying the matrices $\bA$, $\bB$ and $\bC$ in the set $\Sigma$.}
	\subsection{Problem setup} \label{subsec:problem-setup}
	\begin{figure*}[t]
		\begin{center}
			\includegraphics[width=0.75\textwidth]{./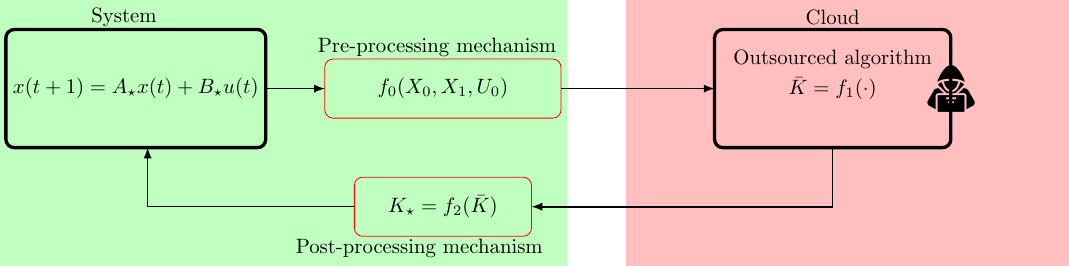}
			\caption{A framework for preserving privacy  in Cloud-based direct data-driven control. The pre-processing mechanism $f_0(\cdot)$ is applied to the data collected from system then the results are transmitted to Cloud where it executes a predefined algorithm $f_1(\cdot)$ for obtaining the controller $\bar{K}$. The post-processing mechanism $f_2(\cdot)$ modifies the obtained controller to the final controller $K_{\s}$ which is applied to the system.}\label{fig:setup}
		\end{center}
	\end{figure*}
	We consider a linear system described by equations of the form
	\be\label{eq:lin_sys}
	x(t+1) = A_{\s}x(t) + B_{\s}u(t),
	\ee
	with state $x \in \R^{n}$ and input $u \in \R^{m}$. In this study, the matrices $A_\s$ and $B_\s$ are real, constant and \textit{unknown}. Furthermore, we assume that $B_\s$ has full column rank. 
	\par We collect data form the system \eqref{eq:lin_sys} by applying the input sequence $u(0), u(1), \ldots, u(T-1)$ and measuring the states $x(0), x(1), \ldots, x(T)$. The collected data is organized into the matrices $(X_0,X_1, U_0)$ as follows:
	\be\label{eq:data-original-ol}
	\begin{aligned}
		& X_0 \coloneqq \bbm x(0) & x(1) & \cdots & x(T-1) \ebm \in \R^{n \times T},\\
		& X_1 \coloneqq \bbm x(1) & x(2) & \cdots & x(T) \ebm \in \R^{n \times T}, \\ 
		& U_0 \coloneqq \bbm u(0) & u(1) & \cdots & u(T-1) \ebm \in \R^{m \times T}.
	\end{aligned}
	\ee
	Note that that these data matrices satisfy 
	\be\label{eq:open-loop-data-system}
	X_1=A_\s X_0 + B_\s U_0.
	\ee
	We aim to have a  controller $u(t) = K_\s x(t)$ with $K_\s \in \R^{m \times n}$ to stabilize  system \eqref{eq:lin_sys} by using data \eqref{eq:data-original-ol}. To achieve this, we outsource the computation of this controller to a cloud computing service which we refer to as Cloud.
	
	\par Additionally, we want to preserve the privacy of the open-loop $(A_{\s}, B_{\s})$ and closed-loop matrix $A_\s +B_\s K_\s$   against Cloud.
	The main reason for this privacy constraint is that 
	these matrices typically contain sensitive information about the process under control. Such information can be leaked to curious adversaries as a result of outsourcing the control problem to Cloud. Moreover, knowledge of these matrices can be leveraged to design additional and potentially more powerful cyber-attacks. We refer the reader to Figure 1 in \cite{teixeira2015secure} for the role that model knowledge plays in designing zero dynamics, covert and bias injection attacks. Later, in Section \ref{sec:case_study}, we demonstrate that  how a successful bias injection attack relies on the accuracy of model knowledge. 
	
	\par Our target applications are  dynamical systems with low computational capabilities or in  extreme cases without any computational resources, meaning  we only have access to the system's sensors and actuators. 
	For complex systems (high dimensions in states and inputs) designing a data-driven controller can be computationally intensive.
	This challenge is further compounded in unstable systems, where large value measurements must be stored and processed.
	Additionally, in switched dynamical systems,  frequent data  measurements and controller updates may be required which introduces  significant computational overhead (see  \cite{rotulo2022online} for data-driven control of switched systems).
	
	\par We have the following assumption with regard to the collected data in this manuscript:
	\begin{assumption}\label{assum:rank-data}
		The matrix $\bbm X_0 \\ U_0\ebm$ in \eqref{eq:data-original-ol} has full row rank.
	\end{assumption}
	The assumption can be checked directly from the data, and it is 
	satisfied under persistently exciting inputs of sufficiently high order \cite{willems2005note}.
	\begin{remark}
		Assumption \ref{assum:rank-data} is aligned with the general assumption in the privacy literature that without any privacy-preserving mechanism an adversary should be considered capable of finding privacy-sensitive information (see, e.g., maxim number one in \cite[p.371]{van2014encyclopedia}). In particular, without any privacy-preserving mechanism in place, Cloud can identify the system matrices $(A_\s, B_\s)$ from data in \eqref{eq:data-original-ol} under Assumption \ref{assum:rank-data}. 
	\end{remark}
	\par Figure \ref{fig:setup} shows the sketch for the privacy-preserving scheme in this study. The functions $f_0$ (pre-processing mechanism) and $f_2$ (post-processing mechanism) and the algorithm $f_1$ (outsourced mechanism) are our design tools. 
	Data collected from the system undergoes pre-processing via $f_0(\cdot)$
	and is then transmitted to  Cloud, where the predefined algorithm $f_1(\cdot)$ computes the controller $\bar{K}$. The post-processing mechanism $f_2(\cdot)$ then adjusts $\bar{K}$ to the final controller $K_\s$
	which is applied to the system.
	We further determine the general properties of these tools in the following assumption:
	\begin{assumption}\label{assum:problem_structure} The followings are assumed:
		\begin{enumerate}[label=A2.\arabic*.]
			\item We design  $f_0$, $f_1$, and $f_2$ and Cloud executes $f_1$ honestly. \label{assum:A1}
			\item The function $f_0$ and $f_2$ are publicly known except for their secret parameters that are randomly chosen. \label{assum:A2}
		\end{enumerate}
	\end{assumption}
	Assumption \ref{assum:A1} determines  Cloud model in our setup. We further point out that while Cloud in this study is a passive adversary, it can share its obtained information with active adversaries or it can be compromised by these adversaries. 
	Assumption \ref{assum:A2} is aligned with Kirchhoff's principle which requires that an encryption scheme must be designed to be secure even if the adversary knows all the details of the scheme except for the key (see, e.g. \cite[p.7]{katz}).
	\par It should be noted that the functions $f_0$ and $f_2$ should impose less computational burden compared to $f_1$, as the former ones will be locally executed. 
	
	\par In this work, we say that  privacy for the matrices $A_\s$, $B_\s$ and $A_\s+B_\s K_\s$ is preserved if Cloud cannot uniquely identify their values based on the available information. Next, we formally present the problem of interest:
	\begin{problem}\label{problem:data-driven}
		Let the data \eqref{eq:data-original-ol} be available from the system \eqref{eq:lin_sys} and suppose that Assumptions \ref{assum:rank-data} and \ref{assum:problem_structure} hold. Design pre- and post-processing mechanisms  $f_0$ and $f_2$ and the algorithm $f_1$ such that:
		\begin{enumerate}[label=P\arabic*.]
			\item The closed-loop system $A_\s + B_\s K_\s$ is stable. \label{prob:stability}
			\item The open-loop system matrices $A_{\s}$ and $B_\s$ remain private against Cloud. \label{prob:privacy_open_loop}
			
			\item The closed-loop system $A_\s + B_\s K_\s$ remains private against Cloud. \label{prob:privacy_closed_loop}
		\end{enumerate}
	\end{problem}
	Next, we present our privacy-preserving scheme for the  system \eqref{eq:lin_sys} solving Problem \ref{problem:data-driven}. Subsequently, in Section \ref{sec:privacy-disturb-sys}, we discuss how the proposed scheme can be modified to cope with the scenario where the input-state data is corrupted by disturbances. 
	
	\section{{Preserving privacy for systems without disturbance}}\label{sec:privacy-clean-sys}
	In this section, we first provide a stabilizing controller while preserving privacy for the open-loop system.  Subsequently, we preserve privacy for the closed-loop system.
	\subsection{Preserving privacy for the open-loop system}
	Recall that the data set \eqref{eq:data-original-ol} has been collected from system \eqref{eq:lin_sys}. We apply a transformation given by
	\be\label{eq:pre-proc-func}
	V_0 := \bbm -(I +G_1)^{-1}F_1 & (I +G_1)^{-1}\ebm \bbm X_0 \\ U_0\ebm,
	\ee
	where $F_1 \in \R^{m \times n}$ and $G_1 \in \R^{m \times m}$ are randomly selected matrices such that $I+G_1$ is invertible. Noting \eqref{eq:open-loop-data-system}, applying the transformation \eqref{eq:pre-proc-func} yields
	\be\label{eq:sys-clean-trans}
	X_1 = {(A_\s + B_\s F_1)}X_0 + {(B_\s + B_\s G_1)}V_0.
	\ee
	While the transformation \eqref{eq:pre-proc-func} is performed on the collected data, which can be considered as the cyber part of a cyber-physical system, we explain in the remark below how it can also be regarded as a primary feedback for the system under control. 
	\begin{remark}
		The  transformation \eqref{eq:pre-proc-func} can be considered equivalently as first applying a primary feedback 
		$$
		u(t)= F_1x(t) + (I_m + G_1) v(t),
		$$
		to the system \eqref{eq:lin_sys} and then collecting data from the system 
		$$
		x(t+1) = (A_{\s} + B_\s F_1) x(t) + (B_{\s} + B_{\s} G_1)v(t),
		$$
		by treating $v(t)$ as the independent input. The collected data satisfies \eqref{eq:sys-clean-trans} as desired. 
		This observation obviates the need to perform inversion for the matrix ($I+G_1$), which incurs the cost of $\calO(m^3)$, and is imposed on either the pre- or post-processing  mechanism. 
	\end{remark}
	
	The transformation \eqref{eq:pre-proc-func} constitutes the pre-processing mechanism, i.e. $f_0$, of the proposed privacy scheme. 
	Next, we ask Cloud to solve the data-driven stabilization problem, but using the transformed data $(X_1,X_0, V_0)$. This brings us to the following result.
	\begin{proposition}\label{prp:clean-data-open-loop}
		Let Cloud receive the data matrices $(X_1, X_0, V_0)$ given in \eqref{eq:data-original-ol} and \eqref{eq:pre-proc-func}. Furthermore, let Cloud solve the LMI \eqref{eq:final_LMI-for_QMI} with matrices $(\bA,\bB,\bC)$ given by
		\begin{equation}\label{eq:clean-data-mat}
		{\bA} = I_{n+m},\quad  
		{\bB}^\top = -{X}_1\bbm {X}_0 \\ V_0\ebm^{\dagger},
		\quad  {\bC} = {\bB}^\top {\bB}.
		\end{equation}
		Assuming that this LMI is feasible, denote it solution by  $(\bar{Y}, \bar{P})$ and set $\bar{K}:=\bar{Y}\bar{P}^{-1}$.
		Then the following holds:
		\begin{enumerate}[(i)]
			\item Stability:  The matrix $A_\s + B_\s K_\s$ is stable where $K_\s = F_1 + (I + G_1)\bar{K}$. \label{prp:stability-clean-data-phase-1}
			\item Open-loop privacy: The matrices $A_\s$ and $B_\s$ remain private against Cloud.\label{prp:privacy-open-clean-data-phase-1}
		\end{enumerate}
	\end{proposition}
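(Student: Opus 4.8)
The plan is to prove the stability claim by showing that the particular choice \eqref{eq:clean-data-mat} collapses the set $\Sigma$ of \eqref{eq:ellips-lemma} to the single \emph{transformed} pair $(\tA,\tB):=\big(A_\s+B_\s F_1,\ B_\s(I+G_1)\big)$, so that Lemma~\ref{lem:stability-QMI} furnishes a stabilizer $\bar K$ for $(\tA,\tB)$, and then undoing the transformation \eqref{eq:pre-proc-func}; and to prove the open-loop privacy claim by exhibiting a continuum of plants that are indistinguishable to Cloud.

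For stability, I would first record that by \eqref{eq:sys-clean-trans} the transmitted data satisfies $X_1=\tA X_0+\tB V_0$, and that $\bbm X_0\\ V_0\ebm$ has full row rank: by \eqref{eq:pre-proc-func} it equals $\bbm I_n & 0\\ -(I+G_1)^{-1}F_1 & (I+G_1)^{-1}\ebm\bbm X_0\\ U_0\ebm$, and the block-lower-triangular left factor, having invertible diagonal blocks $I_n$ and $(I+G_1)^{-1}$, is invertible, so full row rank is inherited from Assumption~\ref{assum:rank-data}. Consequently $\bbm X_0\\ V_0\ebm^\dagger$ is a right inverse and $\bB^\top=-X_1\bbm X_0\\ V_0\ebm^\dagger=-[\tA\ \tB]$. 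Plugging $\bA=I_{n+m}$, $\bB^\top=-[\tA\ \tB]$, $\bC=\bB^\top\bB=[\tA\ \tB][\tA\ \tB]^\top$ into the QMI of \eqref{eq:ellips-lemma} and completing the square rewrites it as $\big([\tA\ \tB]^\top-Z\big)^\top\big([\tA\ \tB]^\top-Z\big)\preceq0$, which forces $Z=[\tA\ \tB]^\top$; hence $\Sigma=\{[\tA\ \tB]^\top\}$. Since also $\bA\succ0$ and $\bB^\top\bA^{-1}\bB-\bC=0\succeq0$, the hypotheses on $(\bA,\bB,\bC)$ hold, so Lemma~\ref{lem:stability-QMI} applied to feasibility of \eqref{eq:final_LMI-for_QMI} gives $\bar P\succ0$ with $(\tA+\tB\bar K)\bar P(\tA+\tB\bar K)^\top-\bar P\prec0$, i.e. $\tA+\tB\bar K$ is Schur. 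It remains to compute $\tA+\tB\bar K=A_\s+B_\s F_1+B_\s(I+G_1)\bar K=A_\s+B_\s K_\s$ with $K_\s=F_1+(I+G_1)\bar K$, which establishes the stability claim.

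For open-loop privacy, I would argue that Cloud's whole view is $(X_1,X_0,V_0)$ and, because $\bbm X_0\\ V_0\ebm$ has full row rank, the only plant-related constraint these encode is $X_1=\tA X_0+\tB V_0$, equivalently the value of $(\tA,\tB)=\big(A_\s+B_\s F_1,\ B_\s(I+G_1)\big)$; by Assumption~\ref{assum:A2} Cloud also knows the form of $f_0$ but not $(F_1,G_1)$. I would then show that every alternative of the following kind is fully consistent with this view: take any full-column-rank $\hat B$ with $\col(\hat B)=\col(\tB)=\col(B_\s)$ and any $\hat F\in\R^{m\times n}$, set $\hat G:=\hat B^{\dagger}\tB-I$ (so $I+\hat G$ is the invertible change of basis between two full-column-rank matrices of equal range) and $\hat A:=\tA-\hat B\hat F$; then $\hat A+\hat B\hat F=\tA$ and $\hat B(I+\hat G)=\tB$, so with the reconstructed input $\hat U_0:=(I+\hat G)V_0+\hat F X_0$ the plant $(\hat A,\hat B)$ reproduces both $X_1=\hat A X_0+\hat B\hat U_0$ and the transmitted $V_0$. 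Since $\hat B$ can be any $B_\s S$ with $S$ invertible, there is a continuum of such candidates, so $(A_\s,B_\s)$ cannot be uniquely identified by Cloud.

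The algebra here — the completing-the-square identity and the substitution checks — is routine; the two points needing care are the full-row-rank claim for $\bbm X_0\\ V_0\ebm$ (which is what both collapses $\Sigma$ to a singleton and guarantees no residual plant information leaks beyond $(\tA,\tB)$) and making the alternative explanation airtight, in particular verifying that the reconstructed input $\hat U_0$ regenerates exactly the transmitted $V_0$. I do not anticipate a genuine obstacle: the substance of the proposition is simply that \eqref{eq:clean-data-mat} is the exact-identification instance of the QMI framework, applied to $(\tA,\tB)$ in place of $(A_\s,B_\s)$.
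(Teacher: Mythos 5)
Your proposal is correct and follows essentially the same route as the paper: stability by noting that the choice \eqref{eq:clean-data-mat} collapses the consistency set to the singleton containing the transformed pair $(A_\s+B_\s F_1,\,B_\s(I+G_1))$ and then invoking Lemma~\ref{lem:stability-QMI}, and open-loop privacy by exhibiting a continuum of alternative plants with $\hat B_\s$ ranging over $B_\s S$, $S$ invertible, that reproduce Cloud's view $(X_1,X_0,V_0)$ — exactly the family $(\bar A-B_\s\tilde F_1,\,B_\s\tilde G)$ used in the paper. Your write-up merely makes explicit two steps the paper leaves implicit (the full-row-rank inheritance of $\bbm X_0\\ V_0\ebm$ from Assumption~\ref{assum:rank-data} and the completion-of-squares showing the set is a singleton), which is fine.
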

	\noindent \textit{Proof of \ref{prp:stability-clean-data-phase-1}:} 
	By Lemma \ref{lem:stability-QMI}, the matrix $A+BK$ is stable for any $(A, B)$ belonging to the set 
	\begin{equation*}
	\begin{aligned}
	{\calC} :=\{\bbm {A} & {B}\ebm^\top :={Z}\mid {\bC}+{\bB}^{\top}{Z}+{Z}^{\top}{\bB}+ {Z}^{\top}{\bA} {Z} \preceq 0 \},
	\end{aligned}
	\end{equation*}
	with $\bA, \bB, \bC$ given by \eqref{eq:clean-data-mat}. 
	{It is easy to see that above set is a singleton and equals to ${\calC}=\{-(\bA^{-1}\bB)\}$.} By \eqref{eq:sys-clean-trans}, it follows that
	$$
	\underline{Z} := \bbm A_{\s} + B_\s F_1&B_{\s} + B_{\s} G_1 \ebm^\top \in {\calC}.
	$$
	Since the program \eqref{eq:final_LMI-for_QMI} is feasible with $\bar{K}=\bar{Y}\bar{P}^{-1}$, we conclude that $A_\s + B_\s(F_1 + (I + G_1)\bar{K})$ is stable. 
	\par \noindent\textit{Proof of \ref{prp:privacy-open-clean-data-phase-1}:}
	The set of matrices that Cloud receives from us is
	\be\label{eq:cloud_knowlege}
	\mathcal{I}_{\text{Cloud}}:=\{X_1, X_0, V_0\}.
	\ee
	By Assumption \ref{assum:rank-data}, the only  pair that is consistent with the received data is
	\be\label{eq:consistent_pair}
	\bar{A}=A_{\s} + B_\s F_1, \quad \bar{B}=B_{\s} + B_{\s}G_1,
	\ee
	where $(\bar{A}, \bar{B})$ is the unique solution to $X_1 = \bar{A}X_0 + \bar{B}V_0$. Note that $(\bar{A}, \bar{B})$ can be computed by Cloud from the data $\mathcal{I}_{\text{Cloud}}$.
	Denote $G:=I+G_1$ and consider the set
	\be\label{eq:quadruple_set}
	\calQ(\tilde F_1, \tilde{G}) := 
	\{
	(\tilde{G}^{-1}\tilde{F}_1, \tilde{G}^{-1} G) \},
	\ee
	where 
	$\tilde{F}_1 \in \R^{m \times n}$ and $\tilde{G} \in \R^{m \times m}$ are random matrices. 
	It can be verified that for any arbitrary pair $(\tilde{F}_1, \tilde G)$, the system matrices $(\hat A_\s , \hat B_\s)=(\bar{A}-B_\s \tilde{F}_1, B_\s\tilde{G})$ together with the matrices $(\hat{F}_1, \hat G)\in \mathcal{Q}(\tilde{F}_1, \tilde G)$ satisfy the equalities
	\[
	\bar{A}=\hat A_{\s} + \hat B_\s \hat F_1, \quad \bar{B}=\hat B_{\s} + B_{\s}G_1,
	\]
	and thus are consistent with \eqref{eq:consistent_pair}.
	%
	Therefore, the true $A_\s$ and  $B_\s$  remain unknown to Cloud. This completes the proof.  
	\EP
	
	In the following remark, we establish the relation of the transformation to secure computation literature.
	\begin{remark} 
		It can be observed from Proposition \ref{prp:clean-data-open-loop} that what Cloud identifies by receiving $(X_0,X_1, V_0)$
		depends on $\im B_\s$. In other words, $A_\s$ and $B_\s$ are unknown up to $\im B_\s$ for Cloud.  In fully actuated systems, we have $m =n$ and $\im B_\s=\mathbb{R}^n$, and thus Cloud can only identify the matrices $\tilde{A} \coloneqq A_{\s} + \DA_{\s}$ and $\tilde{B} \coloneqq B_{\s} + \DB_{\s}$
		where  $\DA_\s \in \R^{n\times n}$ and $\DB_\s \in \R^{n \times n}$ are arbitrary.
		In this case, the scheme becomes the one-time pad encryption scheme in cryptography \cite[Thm.2.9]{katz} where it is perfectly secret. We refer the reader to \cite[Def.2.3]{katz} for the exact definition of a perfectly secret encryption scheme, but generally speaking in this kind of schemes it is impossible to distinguish an encryption of a message $m$ (matrices $A_\s$ and $B_\s$) from an encryption of $\hat m$ (matrices $\hat{A}_\s$ and $\hat{B}_\s$).
	\end{remark}
	
	\vspace{0.15cm}
	\par Bearing in mind the definition of $K_\s$, we have $A_\s+B_\s K_\s=\bar A+\bar B \bar K$ with $(\bar{A}, \bar{B})$ being the unique solution to 
	\be\label{eq:Abar-Bbar-data}
	X_1 = \bar{A}X_0 + \bar{B}V_0.
	\ee
	As Cloud knows the matrices $(\bar A, \bar B)$ together with the controller $\bar K$, it retrieves the closed-loop matrix, 
	which is a breach of privacy with regard to P3 in Problem \ref{problem:data-driven}. Hence, Proposition \ref{prp:clean-data-open-loop}, by itself, does not provide a solution to Problem \ref{problem:data-driven} and it requires modification. 
	In the next subsection, we change the algorithm $f_1$ executed by Cloud and propose a suitable post-processing $f_2$ of the matrix $\bar{K}$ to patch this privacy hole.

	\subsection{Preserving privacy for the closed-loop system}\label{subsec:closed-loop-clean}
	Let $\bar Z:= \bbm \bar A & \bar B\ebm^\top$, where $\bar A$ and $\bar B$ are uniquely obtained from \eqref{eq:Abar-Bbar-data}.
	For a given $\gamma\geq0$, define
	\be\label{eq:clean-data-closed-loop}
	\begin{aligned}
		{C}(\gamma):= \{ \bbm A & B\ebm^\top =:Z& \mid\norm { Z- \bar Z}  \leq \gamma
		\}.
	\end{aligned}
	\ee  
	The set ${C}(\gamma)$ can be written in the form of $\Sigma$ given in \eqref{eq:ellips-lemma} as  
	$$
	{C}(\gamma)=\{\bbm {A} & {B}\ebm^\top :={Z}\mid {\bC}+{\bB}^{\top}{Z}+{Z}^{\top}{\bB}+ {Z}^{\top}{\bA} {Z} \preceq 0 \},
	$$
	with 
	\be \label{eq:clean-data-mat-closed}
	{\bA} = I_{n+m},\,\,  
	{\bB}^\top = -{X}_1\bbm {X}_0 \\ V_0\ebm^{\dagger},
	\,\,  {\bC} = {\bB}^\top {\bB} -\gamma^2 I_n.
	\ee
	Based on the set ${C}(\gamma)$, we define the following semi-definite program (SDP):
	\be \label{eq:dd-max-k-for-Ce}
	\begin{aligned}
		\max_{P, Y, \gamma\geq0} &\quad \gamma \\
		& \text{s.t.}\,\,  	   	\eqref{eq:final_LMI-for_QMI} \,\, \text{with} \,\, (\bA, \bB, \bC)\,\, \text{in } \,\, \eqref{eq:clean-data-mat-closed}.
	\end{aligned}
	\ee
	Note that the program \eqref{eq:dd-max-k-for-Ce} is feasible with $\gamma>0$ as long as  the system \eqref{eq:lin_sys} is stabilizable.
	In the following result, we show how solving  \eqref{eq:dd-max-k-for-Ce} by Cloud and suitably modifying its solution provides us with the desired results.
	
	\begin{proposition}\label{prp:closed-loop-privacy} 
		Let Cloud receive the data matrices $(X_1, X_0, V_0)$ given in \eqref{eq:data-original-ol} and \eqref{eq:pre-proc-func}. Furthermore,   let Cloud solve the SDP \eqref{eq:dd-max-k-for-Ce} with the solution $(\bar P, \bar Y, \bar \gamma)$ and $\bar \gamma >0$.
		Let $\bar{K}:=\bar{Y}\bar{P}^{-1}$, and select
		$F_2 \in \R^{m \times n}$ and $G_2 \in \R^{m \times m}$ 
		such that 
		\be \label{eq:imposing-deltaF-deltaG}
		\norm {\bbm F_2 - F_1 & G_2 - G_1 \ebm} \leq \frac{\bar{\gamma}}{\norm{B_{\s}}},
		\ee
		and 
		\be \label{eq:not-equal-closed-loop}
		\bbm F_2 - F_1 & G_2 - G_1\ebm \bbm I \\ \bar{K} \ebm \neq 0. 
		\ee
		Choose
		\be \label{eq:final-cont-clean-data}
		K_\s = F_2  + (I + G_2) \bar{K}.
		\ee
		Then the following statements hold:
		\begin{enumerate}[(i)]
			\item Stability:  The matrix $A_\s + B_\s K_\s$ is stable. \label{prp:stability-clean-data-phase-2}
			\item Open-loop privacy: The matrices $A_\s$ and $B_\s$ remain private against Cloud. \label{prp:privacy-open-clean-data-phase-2}
			\item Closed-loop privacy: The matrix $A_\s + B_\s K_\s$ remains private against Cloud. \label{prp:privacy-closed-clean-data-phase-2}
		\end{enumerate}
	\end{proposition}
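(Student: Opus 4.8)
The plan is to establish the three items separately, keeping careful track in each case of which matrices Cloud can reconstruct from its data and which it cannot. For item (i), I would first fold the post-processing into virtual open-loop matrices $A_2:=A_\s+B_\s F_2$ and $B_2:=B_\s(I+G_2)$, so that the applied controller gives $A_\s+B_\s K_\s=A_2+B_2\bar K$. The key computation is $\bbm A_2 & B_2\ebm-\bbm\bar A & \bar B\ebm=B_\s\bbm F_2-F_1 & G_2-G_1\ebm$; using submultiplicativity of the induced $2$-norm, its invariance under transposition, and the budget \eqref{eq:imposing-deltaF-deltaG}, this gives $\norm{\bbm A_2 & B_2\ebm^\top-\bar Z}\le\norm{B_\s}\,\norm{\bbm F_2-F_1 & G_2-G_1\ebm}\le\bar\gamma$, i.e. $\bbm A_2 & B_2\ebm^\top\in C(\bar\gamma)$. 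Since $(\bar P,\bar Y,\bar\gamma)$ solves \eqref{eq:final_LMI-for_QMI} with $(\bA,\bB,\bC)$ as in \eqref{eq:clean-data-mat-closed}, Lemma \ref{lem:stability-QMI} yields $(A+B\bar K)\bar P(A+B\bar K)^\top-\bar P\prec0$ for every $\bbm A & B\ebm^\top\in C(\bar\gamma)$; evaluating this at $(A_2,B_2)$ shows $A_2+B_2\bar K=A_\s+B_\s K_\s$ is Schur.

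For item (ii), I would note that the information released to Cloud is still $\mathcal{I}_{\text{Cloud}}=\{X_1,X_0,V_0\}$: relative to Proposition \ref{prp:clean-data-open-loop}, the only changes are that $f_1$ now returns the solution of the SDP \eqref{eq:dd-max-k-for-Ce} and that $F_2,G_2$ are drawn locally, so everything Cloud produces remains a function of the same data. Hence the argument in the proof of item (ii) of Proposition \ref{prp:clean-data-open-loop} applies verbatim: under Assumption \ref{assum:rank-data} the data only pin down $\bar A=A_\s+B_\s F_1$ and $\bar B=B_\s(I+G_1)$, which stay consistent with a continuum of $(A_\s,B_\s)$ as $(F_1,G_1)$ varies (equivalently, as one varies $B_\s$ subject to $\im B_\s=\im\bar B$), so $A_\s$ and $B_\s$ are not uniquely identifiable by Cloud.

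Item (iii) is the crux. Expanding $K_\s=F_2+(I+G_2)\bar K$ and substituting $\bar A=A_\s+B_\s F_1$, $\bar B=B_\s(I+G_1)$ gives
\[
A_\s+B_\s K_\s=\bar A+\bar B\bar K+B_\s\Delta,
\]
where $\Delta:=\bbm F_2-F_1 & G_2-G_1\ebm\bbm I \\ \bar K\ebm$. Condition \eqref{eq:not-equal-closed-loop} is precisely $\Delta\neq0$, and full column rank of $B_\s$ upgrades this to $B_\s\Delta\neq0$, so the applied closed loop is \emph{not} the matrix $\bar A+\bar B\bar K$ that Cloud could otherwise assemble from $\bar A$, $\bar B$ and its own $\bar K$. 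To promote this to non-identifiability, I would argue that $B_\s\Delta$ is unknown to Cloud: it knows $B_\s$ only up to $\im B_\s=\im\bar B$ and $\bbm F_2-F_1 & G_2-G_1\ebm$ only up to the norm bound \eqref{eq:imposing-deltaF-deltaG}, and as one ranges over all $(B_\s,F_1,G_1,F_2,G_2)$ consistent with $\mathcal{I}_{\text{Cloud}}$ the correction $B_\s\Delta$ takes infinitely many distinct values (each of which keeps the associated virtual pair inside $C(\bar\gamma)$, so each alternative closed loop is Schur by the argument in item (i)). Hence Cloud cannot single out $A_\s+B_\s K_\s$, which is P3.

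The step I expect to be the main obstacle is this last one. I must check that the budget \eqref{eq:imposing-deltaF-deltaG} still leaves enough slack for the strict requirement \eqref{eq:not-equal-closed-loop} — this is exactly where the hypothesis $\bar\gamma>0$ is essential, since $\bar\gamma=0$ would force $\Delta=0$ — and, more delicately, that none of the side information Cloud holds (the optimal $\bar\gamma,\bar P,\bar Y,\bar K$ together with the knowledge of $\im\bar B$) suffices to reconstruct $B_\s\Delta$; making that non-uniqueness argument precise is the bulk of the work.
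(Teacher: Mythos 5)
Your proposal is correct and follows essentially the same route as the paper: item (i) via the norm bound \eqref{eq:imposing-deltaF-deltaG} placing $\bbm A_\s+B_\s F_2 & B_\s+B_\s G_2\ebm^\top$ in ${C}(\bar\gamma)$ and invoking Lemma \ref{lem:stability-QMI}, item (ii) by reduction to Proposition \ref{prp:clean-data-open-loop}.\ref{prp:privacy-open-clean-data-phase-1} since Cloud sees the same data, and item (iii) by writing $A_\s+B_\s K_\s$ as the Cloud-known matrix $\bar A+\bar B\bar K$ plus the nonzero correction $B_\s\bbm F_2-F_1 & G_2-G_1\ebm\bbm I \\ \bar K\ebm$, which is unknown to Cloud (the paper stops at ``nonzero and unknown'' rather than formalizing the continuum of consistent closed loops you sketch, consistent with its definition of privacy as non-unique identifiability).
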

	\noindent \textit{Proof of \ref{prp:stability-clean-data-phase-2}:}
	It follows from \eqref{eq:imposing-deltaF-deltaG} that
	$$
	\norm {\bbm B_{\s}(F_2 - F_1) & B_{\s}(G_2 - G_1) \ebm } \leq \bar{\gamma},
	$$
	and thus
	\begin{align*}
	||&\bbm A_{\s}+B_{\s}F_2 & B_{\s}+B_{\s}G_2 \ebm - \\
	&\qquad \bbm A_{\s}+B_{\s}F_1 & B_{\s}+B_{\s}G_1\ebm|| \leq \bar{\gamma}.
	\end{align*}
	Therefore, bearing in mind \eqref{eq:sys-clean-trans}, \eqref{eq:Abar-Bbar-data}, and \eqref{eq:clean-data-closed-loop}, the pair
	$
	\bbm A_{\s}+B_{\s}F_2 & B_{\s}+B_{\s}G_2\ebm \in {C}(\gamma)
	$ 
	and hence the controller $\bar{K}$ stabilizes it, i.e., the matrix
	$$
	A_{\s}+B_{\s}F_2 + (B_{\s}+B_{\s}G_2)\bar{K}
	$$
	is stable. By \eqref{eq:final-cont-clean-data}, we conclude that the matrix $A_\s + B_\s K_\s$ is stable. 
	\par \noindent \textit{Proof of \ref{prp:privacy-open-clean-data-phase-2}:} 
	By noticing that Cloud receives the same data matrices as in \eqref{eq:cloud_knowlege}   the proof reduces to the proof of 
	Proposition \ref{prp:clean-data-open-loop}.\ref{prp:privacy-open-clean-data-phase-1}
	\par \noindent\textit{Proof of \ref{prp:privacy-closed-clean-data-phase-2}:} To prove this statement, first recall that the matrices $F_1$ and $G_1$ are randomly selected in \eqref{eq:pre-proc-func}. 
	Define the matrices
	\bse
	\begin{align}
	\bar{A}_{cl} &:= A_{\s}+B_{\s}F_1+( B_{\s} + B_{\s}G_1)\bar{K} \label{eq:Abarcl} \\
	\Delta  &:= B_\s (F_2 - F_1) + B_\s(G_2 - G_1)\bar{K} \label{eq:Delta}.
	\end{align}
	\ese
	By 
	\eqref{eq:not-equal-closed-loop}, it follows that $\Delta   \neq 0 $. In addition, note that
	\be\label{eq:delta-m-role}
	\bar{A}_{cl}  + \Delta = A_\s + B_\s K_\s.
	\ee
	Note that $\bar{A}_{cl}$ is revealed to Cloud (see \eqref{eq:Abar-Bbar-data} and the subsequent discussion) 
	and that $\bar{K}$ is known to Cloud.  
	Nevertheless, since $\Delta\neq0$ and unknown to Cloud then the closed-loop system $A_\s + B_\s K_\s$ remains private against Cloud. 
	\EP
	\begin{remark}\label{rem:closwd-loop-budget}
		By working with the SDP \eqref{eq:dd-max-k-for-Ce}, rather than simply stabilizing the pair $(\bar A, \bar B)$ as in Proposition \ref{prp:clean-data-open-loop}, we create an additional uncertainty layer. The extent of this uncertainty is proportional to $\bar \gamma$ returned by Cloud. A higher value provides a larger set for the designer to pick the second pair of random matrices $F_2$ and $G_2$ in \eqref{eq:imposing-deltaF-deltaG}. The additional constraint in \eqref{eq:not-equal-closed-loop} is to ensure that the designer does not pick a pair that results in the same closed-loop system as before, i.e. $\bar A+ \bar B \bar K$.
	\end{remark}
	\begin{remark}
		While selecting $F_2$ and $G_2$ in \eqref{eq:imposing-deltaF-deltaG} requires the value of $\norm{B_{\s}}$, the results remain to hold if any known upper bound of $\norm{B_\s}$ is used in \eqref{eq:imposing-deltaF-deltaG}. Nevertheless, the accuracy of this upper bound will affect the size of uncertainty set mentioned in Remark \ref{rem:closwd-loop-budget}.
		%
	\end{remark}

	We summarize the results  
	to provide a solution to Problem \ref{problem:data-driven}.
	\begin{theorem} \label{thrm:main-clean-data}
		Let the data set \eqref{eq:data-original-ol} be collected from the system \eqref{eq:lin_sys}. 
		Then the pre-processing mechanism $f_0$ given in \eqref{eq:pre-proc-func}, the algorithm $f_1$ given in \eqref{eq:dd-max-k-for-Ce} and the post-processing mechanism $f_2$ given in \eqref{eq:imposing-deltaF-deltaG}-\eqref{eq:final-cont-clean-data} provide a solution to Problem \ref{problem:data-driven}.
	\end{theorem}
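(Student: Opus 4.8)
The plan is to derive Theorem~\ref{thrm:main-clean-data} as a bookkeeping consequence of Propositions~\ref{prp:clean-data-open-loop} and~\ref{prp:closed-loop-privacy}. The first task is to recognize that the three objects named in the statement are exactly the ones already analyzed: $f_0$ in \eqref{eq:pre-proc-func} produces the transformed triple $(X_1,X_0,V_0)$, which by \eqref{eq:sys-clean-trans} is consistent with the pair $(\bar A,\bar B)$ of \eqref{eq:Abar-Bbar-data}; Cloud runs $f_1$, namely the SDP \eqref{eq:dd-max-k-for-Ce}, and returns $(\bar P,\bar Y,\bar\gamma)$; and $f_2$ forms $\bar K=\bar Y\bar P^{-1}$ and then $K_\s$ through \eqref{eq:imposing-deltaF-deltaG}--\eqref{eq:final-cont-clean-data}. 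This is precisely the scenario of Proposition~\ref{prp:closed-loop-privacy}, so it remains only to check that the hypotheses under which that proposition is stated are met.

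Next I would settle feasibility. The SDP \eqref{eq:dd-max-k-for-Ce} admits a feasible point with $\bar\gamma>0$ whenever the system \eqref{eq:lin_sys} is stabilizable: by \eqref{eq:sys-clean-trans} the transformed pair sits at the center $\bar Z$ of the ball $C(\gamma)$; since $I+G_1$ is invertible and $F_1$ is a state feedback, stabilizability of $(A_\s,B_\s)$ passes to the transformed pair; and by Lemma~\ref{lem:stability-QMI} together with a routine continuity/robustness argument, a stabilizing $(P,Y)$ for a sufficiently small $\gamma$-ball around a stabilizable pair exists. Hence the optimal value $\bar\gamma$ is strictly positive, which in turn makes the post-processing well posed: the constraint \eqref{eq:imposing-deltaF-deltaG} describes a ball of positive radius $\bar\gamma/\norm{B_\s}$, and \eqref{eq:not-equal-closed-loop} removes only the proper subspace on which $\bbm F_2-F_1 & G_2-G_1\ebm\bbm I \\ \bar K\ebm=0$, so an admissible $(F_2,G_2)$ always exists.

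With the hypotheses verified, I would then simply invoke Proposition~\ref{prp:closed-loop-privacy}: part~(i) yields that $A_\s+B_\s K_\s$ is stable, i.e. P1; part~(ii) yields that $A_\s$ and $B_\s$ cannot be uniquely identified by Cloud, i.e. P2; and part~(iii) yields that $A_\s+B_\s K_\s$ cannot be uniquely identified by Cloud --- because $\Delta$ in \eqref{eq:Delta} is nonzero by \eqref{eq:not-equal-closed-loop} and unknown to Cloud --- i.e. P3. I would also remark that $f_0$ and $f_2$ consist only of matrix products (and, in the primary-feedback implementation, avoid even inverting $I+G_1$), while the SDP is executed by Cloud, so the locality/lightweight requirement following Assumption~\ref{assum:problem_structure} is respected.

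The main obstacle I anticipate is the feasibility step: one must argue that Cloud returns a \emph{strictly} positive $\bar\gamma$, i.e. that robust stabilization over a genuinely nontrivial ball $C(\gamma)$ is achievable, rather than only $\bar\gamma=0$. Everything else is a re-reading of the two propositions. The cleanest way to close this gap, if stabilizability of $(A_\s,B_\s)$ is not to be assumed outright, is to carry it as a standing hypothesis --- it is the minimal requirement for any stabilizing $K_\s$ to exist --- and invoke the continuity argument above.
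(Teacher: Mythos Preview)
Your proposal is correct and follows essentially the same approach as the paper: the paper's own proof is the single sentence ``The proof follows from the proof of Propositions~\ref{prp:clean-data-open-loop} and~\ref{prp:closed-loop-privacy},'' and your argument is precisely that bookkeeping reduction, with the feasibility of \eqref{eq:dd-max-k-for-Ce} handled exactly as the paper does (stated as a consequence of stabilizability rather than proved in detail). The extra remarks you add about well-posedness of $f_2$ and lightweight computation are consistent elaborations, not a different route.
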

	\begin{proof}
		The proof follows from the proof of Propositions \ref{prp:clean-data-open-loop} and \ref{prp:closed-loop-privacy}. 
	\end{proof}

	
	
	\section{Preserving privacy for systems with disturbance}\label{sec:privacy-disturb-sys}
	
	In the previous section, we assumed the data was ``clean". However, in practice, data is often subject to disturbances. To address this, we now consider a scenario where the data is collected from the following system (see also \cite{bisoffi2022data} and \cite{van2020noisy}):
	\be\label{eq:lin-sys-noise}
	x(t+1) = A_{\s}x(t) + B_{\s}u(t) + d(t),
	\ee 
	where $d(t)$ is an unknown disturbance. 
	The presence of the disturbance per se in the system dynamics requires a redesign of the algorithm $f_1$ executed by Cloud to obtain a stabilizing controller. 
	In terms of privacy, while the disturbance $d(\cdot)$ in \eqref{eq:lin-sys-noise} can potentially create uncertainties for Cloud  regarding the open-loop and closed-loop matrices, we do not consider $d(\cdot)$ as an enhancing factor in the privacy-preserving scheme. 
	One of the reasons is that de-noising techniques have been adopted in the literature on data driven control to decrease the effects of the disturbance on the collected data (see, e.g., \cite[Sec.6]{de2021low}). Furthermore, as studied in \cite{bisoffi2021trade}, the identification of a system through data can result in a consistency set which can be substantially smaller than the set on which the control design is based (see the case study in \cite[Sec.5]{bisoffi2021trade}).
	Therefore, as a privacy-aware approach, we ensure that  privacy guarantees required by Problem \ref{problem:data-driven} are independent of both the presence of disturbance and Cloud's strategy in estimating privacy-sensitive parameters.

	\par Due to presence of disturbance, the equation \eqref{eq:open-loop-data-system} modifies to
	\be\label{eq:open-loop-noisy-system}
	X_1=A_\s X_0+ B_\s U_0+D_0,
	\ee
	where 
	$$ D_0  \coloneqq \bbm d(0) & d(1) & \cdots & d(T-1)\ebm \in \R^{n \times T}.$$
	We consider the disturbance class 
	\be \label{eq:dist-model}
	\calD \coloneqq \{D \in \R^{n \times T}\mid D D^\top \preceq \Delta \Delta^\top\},
	\ee 
	for some known $\Delta \in \R^{n \times s} $. 
	The matrix $\Delta$ is based on prior information that we have over $d(\cdot)$.
	While the matrix $D_0$ is unknown, we assume that it belongs to the set in \eqref{eq:dist-model} as formally stated next:
	\begin{assumption}\label{assum:noise-model}
		$D_0 \in \calD$.
	\end{assumption}
	The disturbance set \eqref{eq:dist-model} captures notable classes of $d(\cdot)$ in system and control as has been studied in the data-driven control literature; see e.g. \cite{bisoffi2022data}, \cite{van2020noisy}, and \cite{berberich2020robust}.
	By applying the transformation \eqref{eq:pre-proc-func} to the data in \eqref{eq:open-loop-noisy-system}, we have that
	{\be\label{eq:X1X0V0D0}
		X_1 = \underbrace{(A_\s + B_\s F_1)}_{=:A}X_0 + \underbrace{(B_\s + B_\s G_1)}_{=:B}V_0 + D_0.
		\ee}
	By isolating $D_0$ and substituting it in  \eqref{eq:dist-model},
	we can write the consistency set for $(A, B)$ as
	\be\label{eq:ellipsoid-first-form}
	\begin{aligned}
		{\calC} :=\{\bbm {A} & {B}\ebm^\top :={Z}\mid {\bC}+{\bB}^{\top}{Z}+{Z}^{\top}{\bB}+ {Z}^{\top}{\bA} {Z} \preceq 0 \},
	\end{aligned}
	\ee
	with the matrices obtained from transformed data  
	\be\label{eq:noisy-data}
	\begin{aligned}
		&{\bA} = \bbm {X}_0 \\ V_0\ebm\bbm {X}_0 \\ V_0\ebm^{\top},\quad  
		{\bB} = -\bbm {X}_0 \\ V_0\ebm {X}_1^{\top},\\
		& \qquad \qquad {\bC} = {X}_1{X}_1^{\top} - \Delta \Delta^{\top}.
	\end{aligned}
	\ee
	We note that the set ${\calC} $ differs from the actual data consistency set of the system corresponding to \eqref{eq:open-loop-noisy-system}, thanks to the presence of the random matrices $F_1$ and $G_1$.
	Subsequently, we ask Cloud to solve the LMI \eqref{eq:final_LMI-for_QMI} with the matrices $\bA$, $\bB$, $\bC$ given above.
	An analogous statement to Proposition \ref{prp:clean-data-open-loop} can be stated for stability of the closed-loop matrix $A_\s + B_\s K_\s$ and the privacy of $(A_\s, B_\s)$. Nevertheless, preserving privacy of the closed-loop remains to be an issue, following a similar discussion as in Subsection \ref{subsec:closed-loop-clean} and noting that Cloud could potentially identify the matrix $Z_\s := (A_\s + B_\s F_1, B_\s + B_\s G_1)$ by removing the effect of $D_0$ in \eqref{eq:X1X0V0D0}.
	Next, building on the idea in \eqref{eq:dd-max-k-for-Ce}, we propose a mechanism that preserves privacy of the closed-loop.  
	\subsection{Preserving privacy for the closed-loop system with disturbance}
	
	
	For a given $\gamma\geq0$, define
	\be\label{eq:noisy-data-closed-loop}
	\begin{aligned}
		{C}(\gamma):= \{ \bbm A & B\ebm^\top =:Z \mid \norm { Z- \bar Z } \leq \gamma, 
		\text{for some}\,\, \bar Z \in {\calC} \},
	\end{aligned}
	\ee 
	where ${\calC}$ is given by \eqref{eq:ellipsoid-first-form}.
	This set serves as an extension of \eqref{eq:clean-data-closed-loop} to the case where disturbance is present in the data, which, with a slight abuse of notation, is denoted again by ${C}(\gamma)$. Loosely speaking, the set ${C}(\gamma)$ enlarges the data consistency set \eqref{eq:ellipsoid-first-form} by $\gamma$ in every direction; see Figure \ref{fig:offset}.   
	Now, we can form a similar SDP as in \eqref{eq:dd-max-k-for-Ce} to stabilize the pair $[A \; B]^\top\in {C}(\gamma)$ . However, the main challenge in designing such an algorithm is that  ${C}(\gamma)$ cannot in general be written in the form of a QMI and thus we cannot readily apply Lemma \ref{lem:stability-QMI}. To tackle this challenge, we provide an 
	over-approximation for ${C}(\gamma)$ in the form of a QMI in Proposition \ref{prp:ellip-Rn-same-Ab}. As the derivation of this over-approximation  relies on rewriting the set \eqref{eq:ellipsoid-first-form}, we present its proof along a required lemma in the Appendix. 
	\begin{figure}[h]
		\begin{center}
			\includegraphics[width=0.4\textwidth]{./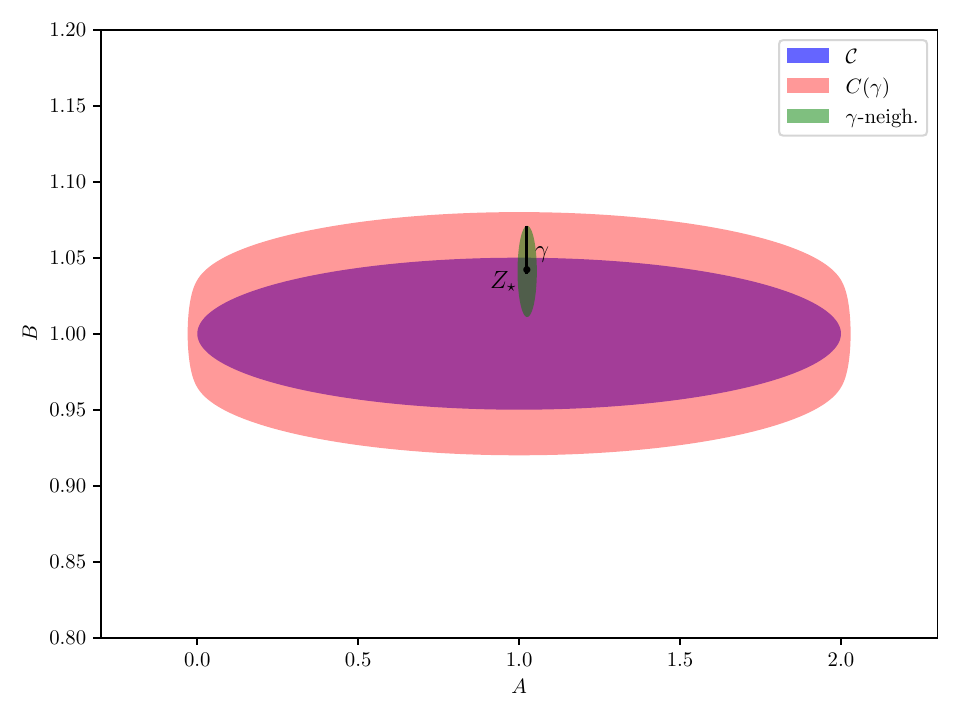}
			\caption{ The pair $Z_\s = (A_\s + B_\s F_1, B_\s + B_\s G_1)$ (bullet point), its $\gamma$-neighborhood systems (green), the consistency set ${\calC}$
				(blue) and the set ${C}(\gamma)$ (red) for a system with $n=m=1$. Note the red surface covers the blue surface.}\label{fig:offset}
		\end{center}
	\end{figure}
	\begin{proposition}\label{prp:ellip-Rn-same-Ab}
		Let ${C}(\gamma)$  be the set defined in \eqref{eq:noisy-data-closed-loop}.
		Define the set $\bar{\calC}$ as
		$$
		\bar{\calC} :=\{\bbm {A} & {B}\ebm^\top :={Z}\mid \bar{{\bC}}+{\bB}^{\top}{Z}+{Z}^{\top}{\bB}+ {Z}^{\top}{\bA} {Z} \preceq 0 \},
		$$
		with 
		\be\label{eq:Cbar}
		\bar{{\bC}} := \bC - (2 \gamma \norm{\bA^{\frac{1}{2}}}\norm{({\bB}^{\top}{\bA}^{-1}{\bB}-{\bC})^{\frac{1}{2}}} + \gamma^2 \norm{\bA})I_n.
		\ee
		Then, it holds that ${C}(\gamma) \subseteq \bar{\calC}$.
	\end{proposition}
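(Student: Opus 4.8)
The plan is to reformulate both $\calC$ and $\bar{\calC}$ by completing the square, and then to control the gap between a point of $C(\gamma)$ and its nearest point of $\calC$ by a single perturbation term. Throughout, set $\boldsymbol{D}:=\bB^{\top}\bA^{-1}\bB-\bC$, which is positive semi-definite by the standing hypothesis on $\Sigma$, and, for a generic $Z$, write $W:=Z+\bA^{-1}\bB$.

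First I would record the completion-of-squares identity
\[
\bC+\bB^{\top}Z+Z^{\top}\bB+Z^{\top}\bA Z=(Z+\bA^{-1}\bB)^{\top}\bA\,(Z+\bA^{-1}\bB)-\boldsymbol{D},
\]
which is valid because $\bA\succ0$ is invertible. Consequently $Z\in\calC$ if and only if $W^{\top}\bA W\preceq\boldsymbol{D}$. Applying the same identity with $\bar{\bC}$ in place of $\bC$ and invoking the definition \eqref{eq:Cbar} of $\bar{\bC}$ gives $\bB^{\top}\bA^{-1}\bB-\bar{\bC}=\boldsymbol{D}+\big(2\gamma\norm{\bA^{1/2}}\norm{\boldsymbol{D}^{1/2}}+\gamma^{2}\norm{\bA}\big)I_{n}$, so that $Z\in\bar{\calC}$ if and only if
\[
W^{\top}\bA W\preceq\boldsymbol{D}+\big(2\gamma\norm{\bA^{1/2}}\norm{\boldsymbol{D}^{1/2}}+\gamma^{2}\norm{\bA}\big)I_{n}.
\]
It therefore suffices to prove that every $Z\in C(\gamma)$ satisfies this last inequality.

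Next, fix $Z\in C(\gamma)$ and pick $\bar Z\in\calC$ with $\norm{Z-\bar Z}\le\gamma$. Put $P:=\bA^{1/2}(\bar Z+\bA^{-1}\bB)$ and $Q:=\bA^{1/2}(Z-\bar Z)$, so that $\bA^{1/2}W=P+Q$ and hence
\[
W^{\top}\bA W=P^{\top}P+(P^{\top}Q+Q^{\top}P)+Q^{\top}Q.
\]
From $\bar Z\in\calC$ we have $P^{\top}P\preceq\boldsymbol{D}$, hence $\norm{P}\le\norm{\boldsymbol{D}^{1/2}}$; and $\norm{Q}\le\norm{\bA^{1/2}}\,\norm{Z-\bar Z}\le\gamma\norm{\bA^{1/2}}$. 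Bounding the indefinite cross term and the quadratic term coordinate-free by multiples of the identity, $P^{\top}Q+Q^{\top}P\preceq 2\norm{P}\,\norm{Q}\,I_{n}$ and $Q^{\top}Q\preceq\norm{Q}^{2}I_{n}$, and using $\norm{\bA^{1/2}}^{2}=\norm{\bA}$, yields
\[
W^{\top}\bA W\preceq\boldsymbol{D}+\big(2\gamma\norm{\bA^{1/2}}\,\norm{\boldsymbol{D}^{1/2}}+\gamma^{2}\norm{\bA}\big)I_{n},
\]
which is precisely the membership condition for $\bar{\calC}$ obtained above. Thus $C(\gamma)\subseteq\bar{\calC}$.

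The only point that needs care is keeping the perturbation estimate tight enough that the inflation constant reproduces \eqref{eq:Cbar} exactly rather than a coarser bound; the key is to factor $\bA^{1/2}$ out of both the centre $\bar Z+\bA^{-1}\bB$ and the perturbation $Z-\bar Z$ before expanding $W^{\top}\bA W$, so that the cross term is governed by the product $\norm{P}\,\norm{Q}$ with $\norm{P}\le\norm{\boldsymbol{D}^{1/2}}$ coming directly from $\bar Z\in\calC$. The centred rewriting of \eqref{eq:ellipsoid-first-form} used in the first step is the content of the auxiliary lemma mentioned just before the statement.
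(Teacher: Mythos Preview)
Your proof is correct and follows essentially the same route as the paper: recenter the QMI via $\bze=-\bA^{-1}\bB$ and $\bQ=\bB^{\top}\bA^{-1}\bB-\bC$, split $\bA^{1/2}W$ into a ``center'' part satisfying $P^{\top}P\preceq\bQ$ and a perturbation part with $\norm{Q}\le\gamma\norm{\bA^{1/2}}$, and bound the cross and quadratic terms by multiples of $I_n$. Your presentation is somewhat more direct than the paper's---you avoid the auxiliary shifted sets $\calC_o,\bar{\calC}_o$, the intermediate set $\calS$, and the explicit parametrization $Z=\bA^{-1/2}\ups\bQ^{1/2}$---but the decomposition and the three bounds are identical in substance.
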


	\par By having an over-approximation for the set \eqref{eq:noisy-data-closed-loop},  we define the optimization program
	\be \label{eq:dd-max-nosiy-data}
	\begin{aligned}
		\max_{P, Y, \gamma>0} &\quad \gamma \\
		& \text{s.t.}\,\,  	   	\eqref{eq:final_LMI-for_QMI} \,\, \text{with} \,\, (\bA, \bB, \bar{{\bC}})\,\, \text{in } \,\, \eqref{eq:noisy-data}, \eqref{eq:Cbar}.
	\end{aligned}
	\ee
	Next, we present the main result of this section:
	\begin{theorem} \label{thrm:main-noisy-data}
		Let the data $(X_0,X_1, U_0)$ be collected from the system \eqref{eq:lin-sys-noise}.
		Suppose that \eqref{eq:dd-max-nosiy-data} be feasible with the solution $(\bar P, \bar Y, \bar \gamma)$.
		Consider the pre-processing mechanism $f_0$ given in \eqref{eq:pre-proc-func}, the algorithm $f_1$ given in \eqref{eq:dd-max-nosiy-data} and the post-processing mechanism $f_2$ given in \eqref{eq:imposing-deltaF-deltaG}-\eqref{eq:final-cont-clean-data}. 
		
		Then:
		\begin{enumerate}[(i)]
			\item Stability:  The matrix $A_\s + B_\s K_\s$ is stable.
			\item Open-loop privacy: The matrices $A_\s$ and $B_\s$ remain private against Cloud.
			\item Closed-loop privacy: The matrix $A_\s + B_\s K_\s$ remains private against Cloud. 
		\end{enumerate}
	\end{theorem}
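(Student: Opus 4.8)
The plan is to mirror the three-part structure already established in Proposition \ref{prp:closed-loop-privacy} and Theorem \ref{thrm:main-clean-data}, using Proposition \ref{prp:ellip-Rn-same-Ab} to bridge the gap created by the disturbance. First I would handle stability. Since \eqref{eq:dd-max-nosiy-data} is feasible with solution $(\bar P, \bar Y, \bar\gamma)$ and $\bar K := \bar Y \bar P^{-1}$, Lemma \ref{lem:stability-QMI} guarantees that $\bar K$ stabilizes every pair $[A\;B]^\top \in \bar{\calC}$, where $\bar{\calC}$ is the QMI over-approximation of $C(\bar\gamma)$ from Proposition \ref{prp:ellip-Rn-same-Ab}. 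So it suffices to show the closed-loop-relevant pair lands in $\bar{\calC}$. Writing $Z_\s := (A_\s + B_\s F_1,\, B_\s + B_\s G_1)^\top$, equation \eqref{eq:X1X0V0D0} with Assumption \ref{assum:noise-model} gives $Z_\s \in \calC$. The post-processing choices $F_2, G_2$ satisfying \eqref{eq:imposing-deltaF-deltaG} yield, exactly as in the proof of Proposition \ref{prp:closed-loop-privacy}\ref{prp:stability-clean-data-phase-2}, that $\norm{(A_\s + B_\s F_2,\, B_\s + B_\s G_2)^\top - Z_\s} \le \bar\gamma$. Hence $(A_\s + B_\s F_2,\, B_\s + B_\s G_2)^\top \in C(\bar\gamma) \subseteq \bar{\calC}$ by the definition \eqref{eq:noisy-data-closed-loop} and Proposition \ref{prp:ellip-Rn-same-Ab}. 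Therefore $\bar K$ stabilizes it, i.e.\ $A_\s + B_\s F_2 + (B_\s + B_\s G_2)\bar K$ is Schur, and by the definition \eqref{eq:final-cont-clean-data} of $K_\s$ this matrix equals $A_\s + B_\s K_\s$, proving stability.

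For open-loop privacy, I would observe that Cloud still receives only the triple $(X_1, X_0, V_0)$; the disturbance does not change what information is transmitted. The argument then reduces to the one in the proof of Proposition \ref{prp:clean-data-open-loop}\ref{prp:privacy-open-clean-data-phase-1}, with the one caveat that, in the noisy case, the pair consistent with the transmitted data is no longer unique even before introducing $F_1, G_1$ — any $(A, B)$ with $X_1 - A X_0 - B V_0 \in \calD$ is admissible. This only strengthens the non-identifiability: the family of pairs $(\bar A - B_\s \tilde F_1,\, B_\s \tilde G)$ parameterized by arbitrary $(\tilde F_1, \tilde G)$ remains consistent with every such admissible $(\bar A, \bar B)$, because the randomization acts through $\im B_\s$ exactly as before. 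Hence $A_\s, B_\s$ cannot be uniquely pinned down by Cloud.

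For closed-loop privacy, I would again follow the proof of Proposition \ref{prp:closed-loop-privacy}\ref{prp:privacy-closed-clean-data-phase-2}. Define $\bar A_{cl} := A_\s + B_\s F_1 + (B_\s + B_\s G_1)\bar K$ and $\Delta_{cl} := B_\s(F_2 - F_1) + B_\s(G_2 - G_1)\bar K$; condition \eqref{eq:not-equal-closed-loop} forces $\Delta_{cl} \neq 0$, and one checks $\bar A_{cl} + \Delta_{cl} = A_\s + B_\s K_\s$. The subtlety relative to the clean case is that $\bar A_{cl}$ is not exactly revealed to Cloud anymore — Cloud only knows it lies in the set of closed-loop matrices compatible with $\calC$ and $\bar K$. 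But this is immaterial for the privacy claim: since $F_1, G_1$ (hence the offset $\Delta_{cl}$) are random and unknown to Cloud, and $\Delta_{cl}\neq 0$, the true closed-loop matrix $A_\s + B_\s K_\s$ cannot be uniquely identified, regardless of Cloud's estimation strategy.

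The main obstacle is the stability step, and it is already resolved by Proposition \ref{prp:ellip-Rn-same-Ab}: the difficulty is that $C(\gamma)$ — the $\gamma$-fattening of the ellipsoidal consistency set \eqref{eq:ellipsoid-first-form} — is generally not itself a QMI-representable set, so Lemma \ref{lem:stability-QMI} does not apply to it directly. Everything hinges on the over-approximation $C(\gamma)\subseteq\bar{\calC}$ with the explicitly inflated constant term $\bar{\bC}$ in \eqref{eq:Cbar}; once that inclusion is in hand, the feasibility of \eqref{eq:dd-max-nosiy-data} transfers robust stabilization from $\bar{\calC}$ down to the single pair of interest, and the rest is bookkeeping identical to the disturbance-free case. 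I would therefore keep this proof short, citing Propositions \ref{prp:clean-data-open-loop}, \ref{prp:closed-loop-privacy} and \ref{prp:ellip-Rn-same-Ab} for the three components rather than re-deriving them.
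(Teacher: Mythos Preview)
Your proposal is correct and follows essentially the same route as the paper: stability via the inclusion $C(\bar\gamma)\subseteq\bar{\calC}$ from Proposition~\ref{prp:ellip-Rn-same-Ab} together with Lemma~\ref{lem:stability-QMI}, and privacy by reduction to the disturbance-free arguments of Propositions~\ref{prp:clean-data-open-loop} and~\ref{prp:closed-loop-privacy}. The only methodological difference is that for (ii) and (iii) the paper argues the worst case by granting Cloud full knowledge of the disturbance matrix $D_0$, which collapses the situation \emph{exactly} to the clean-data setting (with $(\bar A,\bar B)$ the unique solution of $X_1-D_0=\bar A X_0+\bar B V_0$); your direct observation that the disturbance only enlarges Cloud's uncertainty is also valid, but the paper's reduction is crisper and avoids having to track the enlarged consistency set.
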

	\noindent \textit{Proof of (i):} Analogous to the proof of Proposition \ref{prp:closed-loop-privacy}.(i), the pair 	
	$\bbm A_{\s}+B_{\s}F_2 & B_{\s}+B_{\s}G_2\ebm \in {C}(\gamma)$ in \eqref{eq:noisy-data-closed-loop}.
	By Proposition \ref{prp:ellip-Rn-same-Ab}, we have that ${C}(\gamma) \subseteq \bar{\calC}$ and therefore the 
	aforementioned pair belongs to $\bar{\calC}$.
	By feasibility of the optimization \eqref{eq:dd-max-nosiy-data} and bearing in mind \eqref{eq:final-cont-clean-data}, it follows that  
	the matrix $A_\s + B_\s K_\s$ is stable.    
	\par \noindent \textit{Proof of (ii):} It is sufficient to prove the statement for the case where Cloud knows the disturbance matrix $D_0$. This assumption reduces the proof to the proof of Proposition \ref{prp:clean-data-open-loop}.(ii) with the only difference that $(\bar{A}, \bar{B})$ in \eqref{eq:consistent_pair} is now the unique solution to $X_1 -D_0 = \bar{A}X_0 + \bar{B}V_0$.
	\par \noindent \textit{Proof of (iii):} Again, it is sufficient to prove the statement for the case where Cloud knows $D_0$. 
	Under this assumption, Cloud infers the value of $\bar{A}_{cl}$ in \eqref{eq:Abarcl}, but not the nonzero matrix $\Delta$ in \eqref{eq:Delta}. This completes the proof (see also the proof of Proposition \ref{prp:closed-loop-privacy}.(iii).) \EP
	\section{Case study}\label{sec:case_study}
	In order to examine the proposed privacy-preserving scheme,
	we consider a batch reactor system which its discrete linearized model is given by \cite{de2019formulas}
	
	\footnotesize
	\be\label{eq:batch-reactor}
	\left[ 
	\begin{array}{c|c}
		A_\s & B_\s
	\end{array}
	\right]
	= 10^{-3}\left[ 
	\begin{array}{cccc|cc}
		1178 & 1 & 511 & -403 & 4 & -87\\
		-51 & 661 & -11 & 61 &  467 & 1 \\
		76 & 335 & 560 & 382 & 213 & -235\\
		0 & 335 & 89 & 849 & 213 & -16
	\end{array}
	\right].
	\ee
	\normalsize
	Note that the system is open-loop unstable and the true model is only used for collecting data.
	
	\par For generating the data set \eqref{eq:data-original-ol}\footnote{{Simulation files are available at \url{https://github.com/teimour-halizadeh/privacy_data_driven_control}}}, we apply to each input channel a random input sequence from the uniform distribution {$u \sim U(-5, 5)$} with length ${T=20}$ and we set {$x_0 \sim U(-2.5, 2.5)$}.
	For the privacy mechanism in Theorem \ref{thrm:main-clean-data} ($f_0$ mechanism), we randomly select entries of $F_1$ and $G_1$ from uniform distribution {$U(-1, 1)$} and then ask Cloud to solve the optimization \eqref{eq:dd-max-k-for-Ce}. The program is feasible and returns the solution
	\be\label{eq:Kbar-batch} 
	\bar{K} = 10^{-3}\bbm 1907 & -1439 & 1372 & -2721 \\
	3798 & 69 & 2305 & -2663
	\ebm, \,\,   \bar{\gamma} = 0.054
	\ee
	Then from Theorem \ref{thrm:main-clean-data} and assuming that $\norm{B_\s}$ is known, we select $F_2$ and $G_2$ such that \eqref{eq:imposing-deltaF-deltaG} and \eqref{eq:not-equal-closed-loop} hold and apply 
	$K_\s =  F_2 + (I +G_2)\bar{K}$ to the system. The controller $K_\s$ stabilizes system \eqref{eq:batch-reactor} as shown in 
	Figure \ref{fig:bias_injection} (blue line).
	\subsection{Preventing bias injection attack}
	As we argued in Subsection \ref{subsec:problem-setup}, an accurate model knowledge plays a crucial role in designing stronger attacks against a system. 
	We consider a scenario where either Cloud shares the information that it has obtained with an active adversary denoted as $\calA$ or Cloud has been compromised by $\calA$.   The set of parameters that $\calA$ receives from Cloud is
	$$
	\mathcal{I}_{\calA}:= \{X_1, X_0, V_0, \bar{\gamma}, \bar{K}\}. 
	$$
	Furthermore, based on Assumption \ref{assum:problem_structure}, $\calA$ knows the mechanisms $f_0$, $f_1$ and $f_2$. 
	\par The goal of $\calA$ is to inject a bias with high impact into steady-state trajectory of the system while remaining undetected. For reaching this goal, we assume that $\calA$ follows the procedure in \cite[Sec. 4.6]{teixeira2015secure}. In terms of disruption resources, $\calA$ has access to the actuator data of the system and  injects (adds) a signal $a(t)$ to $u(t)$ at time $t= T_{\text{inj}}$ and thus {allows} $u(t) + a(t)$ be applied to  system \eqref{eq:batch-reactor}. The form of the attack signal \cite[Eq. 19]{teixeira2015secure} for $t\geq T_{\text{inj}}$ is given by
	$$
	a(t+1) = {\beta}I_2 a(t) + (1 - \beta)I_2 a_{\infty},
	$$
	where {$a(T_{\text{inj}}) = 0$} and $0 < \beta < 1$. The main parameter that $\calA$ should design is $a_\infty \in \R^{m}$ for which it needs to consider two factors: First how to maximize its impact on the states of the system and second how to do so while remaining undetected. 
	\par The steady-state impact of $a(t)$ on the states of the closed-loop system is given by
	\be\label{eq:steady-state-attacker}
	x_\infty^{a}:=(I - \hat{A}_{cl})^{-1}\hat{B}_{cl} a_\infty,
	\ee
	where $\hat{A}_{cl} = A_\s + B_\s F_2 + (B_\s + B_\s G_2)\bar{K}$  and $\hat{B}_{cl} = B_\s$ are the true values of the closed-loop state matrix and the input matrix, respectively.
	\par On the system side we assume that the system is  equipped with an anomaly detection given by
	\be\label{eq:anomal-detect}
	r(t) = x(t), \quad t\geq T_a,
	\ee
	where $r(t)$ is the residual at time $t$ and $T_a$ is a sufficiently large time constant.
	We also consider the set
	\be\label{eq:stealthy-set}
	\mathcal{U} := \{ \norm{r(t)} \leq \delta_\alpha , \forall t \geq T_a\},
	\ee
	where $\delta_\alpha\geq 0$. Note that we assume  $\delta_\alpha$ is known to $\calA$. 
	
	\par For injecting a maximum-energy bias while remaining unnoticed (stealthy) the adversary $\calA$ thus needs to solve the optimization \cite[ Eq.21]{teixeira2015secure}
	\be \label{eq:max_bias_injection}
	\begin{aligned}
		\max_{a_\infty} &\quad \norm{x_\infty^{a}} \\
		& \text{s.t.}\quad  	   	r(t) \in \cal U.
	\end{aligned}
	\ee
	Note that the constraint at steady-state is equivalent to $\norm{x_\infty^{a}} \leq \delta_{\alpha}$.
	
	\par We set $\delta_\alpha = 0.2$ in set \eqref{eq:stealthy-set} and assume that $\calA$ starts injecting $a(t)$ with $\beta = 0.5$ at {$T_{\text{inj}} = 10$} by first attempting to solve the optimization \eqref{eq:max_bias_injection}. 
	
	In Table \ref{tab:who-knows-what}, we have specified  knowledge levels for $\calA$ and different bias injection policies that it adopts for solving \eqref{eq:max_bias_injection}.
	\begin{table}[h]
		\caption{knowledge of $\calA$ and its policies for bias injection}\label{tab:who-knows-what}
		\begin{tabular}{c|c}
			\hline
			True values &  \makecell{$B_{cl} = B_\s$ \\ $A_{cl} = A_\s + B_\s F_2 + (B_\s + B_\s G_2)\bar{K}$
			} \\ \hline
			Knowledge of $\calA$	& \makecell{$\hat{A} = A_\s + B_\s F_1$\\ $\hat{B} = B_\s + B_\s G_1$\\ $\bar{K}$, $\bar{\gamma}$, $\delta_\alpha$}  \\ \hline
			\multirow{4}{*}{Policy of $\calA$} & \makecell{(I): {exact knowledge} \\ $\hat{B}_{cl} = B_\s$ \\ $\hat{A}_{cl} = A_\s + B_\s F_2 + (B_\s + B_\s G_2)\bar{K}$}  \\ \cline{2-2} 
			&  \makecell{(II): {uses its estimation} \\ $\hat{B}_{cl} = B_\s + B_\s G_1$ \\ $\hat{A}_{cl} = A_\s + B_\s F_1 + (B_\s + B_\s G_1)\bar{K}$} \\ \cline{2-2} 
			&  \makecell{(III):  {knows $\norm{{B}_{\s}} $}   \\ {$\hat{B}_{cl} = ({\norm{B_\s}}/{\norm{B_\s + B_\s G_1}})(B_\s + B_\s G_1)$} \\ $\hat{A}_{cl} = A_\s + B_\s F_1 + (B_\s + B_\s G_1)\bar{K}$} \\ \cline{2-2} 
			&  \makecell{(IV): {knows ${B}_{\s} $}   \\ $\hat{B}_{cl} = B_\s$ \\ $\hat{A}_{cl} = A_\s + B_\s F_1 + (B_\s + B_\s G_1)\bar{K}$}
		\end{tabular}
	\end{table}
	
	\begin{figure}[h]
		\begin{center}
			\includegraphics[width=0.45\textwidth]{./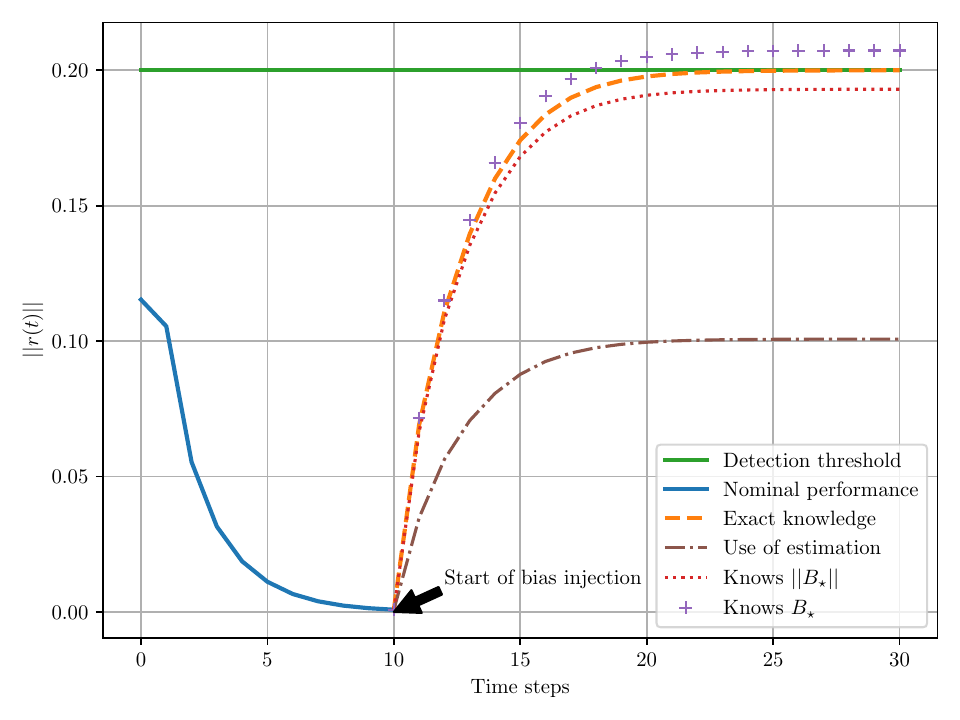}
			\caption{ Bias injection results by the active adversary $\calA$ using different system's model knowledge and policies.  }\label{fig:bias_injection}
		\end{center}
	\end{figure}
	{We have shown the results of this simulation at Figure \ref{fig:bias_injection}.}
	As it can be seen from this figure, 
	$\calA$ injects successfully the maximum impact bias (dashed line) when there is no privacy mechanism, i.e., when it knows $\hat{A}_{cl}$ and $\hat{B}_{cl}$ in \eqref{eq:steady-state-attacker}. 
	
	\par The dash-dotted line shows the response to $a_\infty$ when $\calA$ uses its estimation in Policy (II) for solving \eqref{eq:max_bias_injection}.
	It is clear that the steady-state value of $\norm{r(t)}$ remains substantially below threshold level $\delta_{\alpha}$ and thus does not have much impact on the system.
	In the next two scenarios, we suppose a situation when $\calA$ has obtained extra-information (side-knowledge) about the system \eqref{eq:batch-reactor}. First we consider the case when $\calA$ knows $\norm{B_{\s}}$. By using this information, $\calA$ modifies its estimation of $B_{cl}$ to $\hat{B}_{cl} = ({\norm{B_\s}}/{\norm{B_\s + B_\s G_1}})(B_\s + B_\s G_1) $; see Policy (II) in Table \ref{tab:who-knows-what}.
	The dotted line shows the response for this case. 
	As it can be seen, while the steady-state value of $\norm{r(t)}$ now has a higher value compared to previous case, it does not reach the maximum impact. 
	\par The next case that we consider is when 
	$\calA$ knows the exact value of $B_{\s}$. 
	As it can be seen form the figure (crossed line), the
	steady-state value of $\norm{r(t)}$  exceeds the threshold $\delta_{\alpha}$. The reason that even by knowing $B_\s$, Cloud still is unable to have the best possible attack is that the closed-loop system is unknown to it, which also emphasizes the importance of preserving privacy for the closed-loop system.
	These observations are in accordance with \cite[Sec.4.6]{teixeira2015secure} which argues that for implementing a successful bias injection the adversary needs to know the true steady-state dynamics of closed-loop system and detector.
	\subsection{The effects of disturbance $d(t)$ on $\bar{\gamma}$}
	As we observed in Proposition \ref{prp:closed-loop-privacy}, the value $\bar{\gamma}$ allowed the system designer to preserve privacy for the closed-loop system, and hence acted as a privacy budget. 
	This motivates us to study the relationship of $\bar{\gamma}$ and the disturbance magnitude. 
	To accomplish this, we consider the disturbance $d(\cdot)$ in \eqref{eq:lin-sys-noise} to have a uniform distribution $d(t) \sim U(-d_{\text{max}}, d_{\text{max}})$ where   {$d_{\text{max}} \in \{0, 0.02, \ldots, 0.16\}$} for each experiment. For each $d_{\text{max}}$, we collect $10^3$ data sets with the same parameters that we obtained \eqref{eq:Kbar-batch},  
	and compute $\bar{\gamma}$ by solving \eqref{eq:dd-max-nosiy-data}. We set $\Delta \Delta^\top = (n  d_{\text{max}}^2  T) I_n$ as it satisfies Assumption \ref{assum:noise-model}. 
	\begin{figure}[h]
		\begin{center}
			\includegraphics[width=0.49\textwidth]{./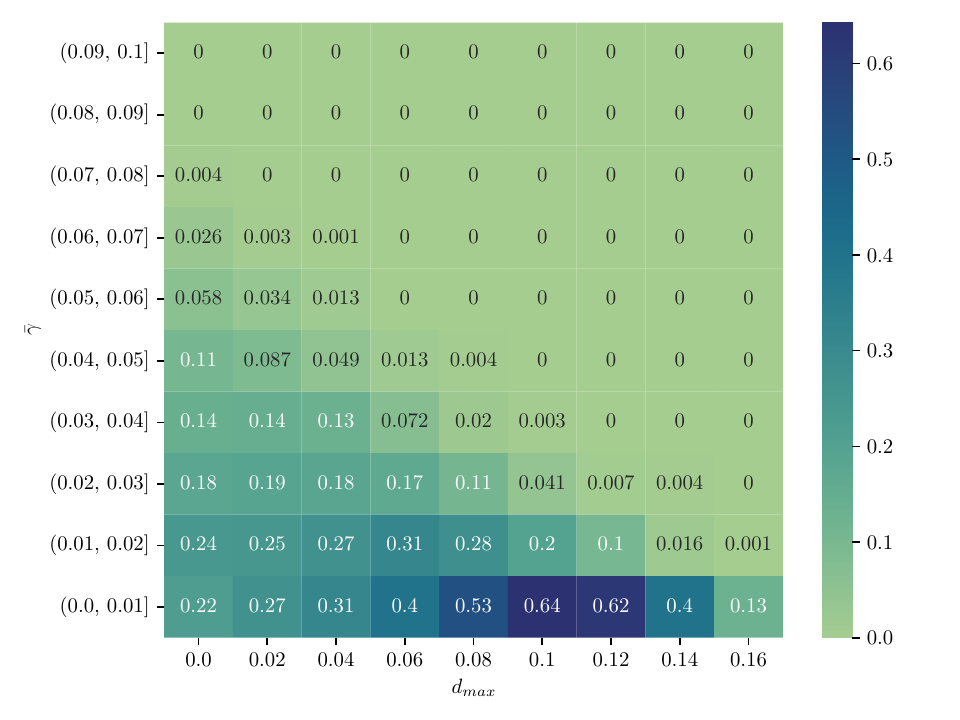}
			\caption{Results for the optimization \eqref{eq:dd-max-nosiy-data} with different values of $d_{\text{max}}$ (the upper-bound for the disturbance) and their corresponding ${\bar{\gamma}}$ (the privacy budget for the closed-loop system). The results should be read as, e.g.,
				we have obtained the ${\bar{\gamma}}$ as $0.02 < \bar{\gamma}\leq 0.03$  by solving \eqref{eq:dd-max-nosiy-data}  for $18\%$ of the collected data sets  when a uniform disturbance with $d_{\text{max}}=0.04$ was present in the system's dynamic.   
			}\label{fig:disturbance_epsilon_results}
		\end{center}
	\end{figure}
	
	\par The results are shown in Figure \ref{fig:disturbance_epsilon_results}. The value inside each square shows the fraction of the number of data sets to which
	$\bar\gamma$ returned from \eqref{eq:dd-max-nosiy-data} belongs. 
	We can see from this figure that $\bar{\gamma}$ decreases as $d_{\text{max}}$ increases.
	This means that the privacy budget is conversely related to the noise magnitude.  

	
	
	\section{Conclusion}\label{sec:conclusion}
	We have presented a scheme for outsourcing the computation of a data-driven stabilizing controller to a cloud computing service for an LTI system. 
	The scheme preserves privacy for the open-loop and closed-loop system matrices, can be adopted for both data without disturbance and with disturbance, is lightweight in computational over-head and does not degrade the performance of a system. We have demonstrated the effectiveness of the proposed scheme in a numerical case study of a batch reactor system.
	Future research directions for this study include extension of the results to signal tracking and optimal control problems and considering other models for Cloud.

	\section{Appendix: proof of Proposition \ref{prp:ellip-Rn-same-Ab}}
	To present the proof of Proposition \ref{prp:ellip-Rn-same-Ab}, we recall another form for the set ${\calC}$ given in \eqref{eq:ellipsoid-first-form}
	as (see \cite[Lem.2]{bisoffi2022data})
	\be \label{eq:ellipsoid-second-form}
	{\calC} =\{\bbm {A} & {B}\ebm^\top :={Z}\mid({Z}-{\bze})^{\top}{\bA}({Z}-{\bze})\preceq {\bQ}\},
	\ee
	where
	\be\label{eq:zeta-Q-data}
	{\bze} := -{\bA}^{-1}{\bB}, \quad    
	{\bQ}  := {\bB}^{\top}{\bA}^{-1}{\bB}-{\bC}.
	\ee
	\par In the next lemma, we remove the shift $\bze$ from the set ${\calC}$ and ${C}(\gamma)$.
	\begin{lemma}\label{lem:moving-zeta}
		Let ${\calC}$ and ${C}(\gamma)$ be the sets given in \eqref{eq:ellipsoid-second-form} and \eqref{eq:noisy-data-closed-loop}, respectively.
		Define the sets 
		\be\label{eq:ellips-matrix-Q-pd}
		\calC_o := \{\bbm {A} & {B}\ebm^\top :={Z}\mid Z^\top \bA Z\preceq \bQ \},
		\ee
		and 
		\be\label{eq:Cepsilon-ZtAZ}
		\begin{aligned}
			C_o(\gamma):= \{ \bbm {A} & {B}\ebm^\top :={Z} \mid \norm { Z- \underline{Z} } \leq \gamma, 
			\,\, \text{for some}\,\,\underline{Z} \in \calC_o  \}. 
		\end{aligned}
		\ee 
		Then the following statements hold:
		\begin{enumerate}[(i)]
			\item 			  $ \bar{Z}  \in \calC_o$ if and only if $\bar{Z} + \bze \in {\calC}$.
			\item 			 $\bar{Z}  \in  	C_o(\gamma)$ if and only if $\bar{Z} + \bze \in {C}(\gamma)$. 
		\end{enumerate}
	\end{lemma}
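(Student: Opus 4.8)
The plan is to prove both equivalences by direct substitution, exploiting that the map $Z \mapsto Z + \bze$ is an invertible affine shift of the matrix variable (note $\bze = -\bA^{-1}\bB$ is a fixed matrix, independent of $Z$). For part (i), I would start from the second form of $\calC$ in \eqref{eq:ellipsoid-second-form}: a matrix $W$ lies in $\calC$ exactly when $(W-\bze)^\top \bA (W-\bze) \preceq \bQ$. Setting $W = \bar Z + \bze$ gives $W - \bze = \bar Z$, so the defining inequality becomes $\bar Z^\top \bA \bar Z \preceq \bQ$, which is precisely the condition for $\bar Z \in \calC_o$ in \eqref{eq:ellips-matrix-Q-pd}. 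Since each step is an equivalence, this settles (i) in both directions at once.

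For part (ii), I would combine (i) with translation invariance of the induced $2$-norm, namely $\norm{Z_1 - Z_2} = \norm{(Z_1+\bze)-(Z_2+\bze)}$ for matrices of compatible size. Concretely, if $\bar Z \in C_o(\gamma)$, pick $\underline Z \in \calC_o$ with $\norm{\bar Z - \underline Z} \le \gamma$ as in \eqref{eq:Cepsilon-ZtAZ}; by (i), $\underline Z + \bze \in \calC$, and $\norm{(\bar Z + \bze) - (\underline Z + \bze)} = \norm{\bar Z - \underline Z} \le \gamma$, so $\bar Z + \bze \in C(\gamma)$ by \eqref{eq:noisy-data-closed-loop}. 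The converse runs the same way: given $W \in \calC$ with $\norm{(\bar Z + \bze) - W} \le \gamma$, set $\underline Z := W - \bze$, which lies in $\calC_o$ by (i), and observe $\norm{\bar Z - \underline Z} = \norm{(\bar Z + \bze) - W} \le \gamma$, hence $\bar Z \in C_o(\gamma)$.

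There is no real obstacle here; the statement is essentially a change-of-variables bookkeeping lemma whose purpose is to let the subsequent over-approximation argument work with the zero-centered set $\calC_o$ instead of the shifted set $\calC$. The only points deserving care are keeping track of which direction of each ``if and only if'' is being invoked so the shift is applied consistently, and noting that the affine shift by $\bze$ is a bijection on the relevant matrix space, so no solutions are gained or lost when translating between $C_o(\gamma)$ and $C(\gamma)$.
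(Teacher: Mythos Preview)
Your proposal is correct and follows essentially the same approach as the paper: part (i) is dismissed as a direct substitution (the paper calls it ``trivial''), and part (ii) is proved via translation invariance of the norm together with part (i), just as you do. The only difference is that you spell out both directions of (ii) explicitly, whereas the paper writes one direction and declares the converse analogous.
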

	\begin{proof}
		The proof for statement (i) is trivial. For statement (ii), 
		note that if $\bar{Z} \in {C}_o(\gamma)$ we have
		$$
		\norm { \bar{Z}- \underline{Z} } = \norm {\bar{Z}+\bze - (\underline{Z} +\bze)} \leq \gamma.
		$$
		Since $(\underline{Z} +\bze) \in {\calC}$ by statement (i),  then
		$\bar{Z}+\bze \in {C}(\gamma)$. The converse is analogous. 
	\end{proof}

	\par \noindent \textit{Proof of Proposition \ref{prp:ellip-Rn-same-Ab}:}
	We rewrite the set $\bar{\calC}$ as 
	\be\label{eq:over-matrix-Qb-zeta}
	\bar{\calC} = \{\bbm {A} & {B}\ebm^\top :={Z}\mid (Z - \bze)^\top \bA (Z- \bze) \preceq \bar{\bQ} \},
	\ee
	with the matrices
	\begin{align*}
	\bar{\bQ}   :=  \bQ    + 2 \gamma \norm{\bA^{\frac{1}{2}}}\norm{\bQ^{\frac{1}{2}}}I_n + \gamma^2 \norm{\bA}I_n,
	\end{align*}
	$\bA$ in \eqref{eq:noisy-data} and $\bze$ and $\bQ$ given in \eqref{eq:zeta-Q-data}. 
	We also remove the shift $\bze$ and define the set
	$$
	\bar{\calC}_{o} := \{\bbm {A} & {B}\ebm^\top :={Z}\mid Z^\top \bA Z \preceq \bar{\bQ} \}. 
	$$
	From Lemma \ref{lem:moving-zeta}, it follows that for proving ${C}(\gamma) \subseteq \bar{\calC}$ we equivalently can prove 
	$
	C_o(\gamma) \subseteq \bar{\calC}_{o}
	$. 
	\par 
	For any $\bar{Z} \in C_o(\gamma)$ there exists $\underline{Z} \in \calC_o$ such that
	$\norm{\bar{Z}- \underline{Z}} \leq \gamma$. 
Thus to prove ${C_o(\gamma) \subseteq 	\bar{\calC}_{o}}$, we consider the set
	\be\label{eq:offset-matrix-2}
	\begin{aligned}
		{\calS} = \{Z + \lambda \frac{\Delta}{\norm{\Delta}}\mid Z \in \calC_o,\,\, & 0 \leq \lambda \leq \gamma, \,\, \\
		& \text{and} \,\, \Delta \in \R^{(n+m) \times n} \},
	\end{aligned}
	\ee
	and prove ${{\calS} \subseteq 	\bar{\calC}_{o}}$ noting that $C_o(\gamma) \subseteq {\calS} $. Consider an arbitrary $\bar{Z} \in {\calS}$ as 
	\be\label{eq:Zbar-surface}
	\bar{Z} = Z + \lambda \frac{\Delta}{\norm{\Delta}},
	\ee
	with $Z \in \calC_o$ in \eqref{eq:ellips-matrix-Q-pd}. We rewrite \eqref{eq:Zbar-surface} as
	\be\label{eq:zs_boundary_CQ}
	\bar{Z} = \bA^{-1/2} \ups \bQ^{1/2} + \lambda \frac{\Delta}{\norm{\Delta}},
	\ee
	for some matrix $\ups$ with ${\ups}^{\top}\ups \preceq I$. By denoting $W_1:=\bQ^{1/2}\ups^\top\bA^{1/2}$, we have that 
	\be\label{eq:Zs-A-Zs-Q}
	\begin{aligned}
		&\bar{Z} ^\top \bA \bar{Z}  = \big(\bA^{-1}W_1^{\top} +  \lambda \frac{\Delta}{\norm{\Delta}}\big)^{\top} \bA \big(\bA^{-1}W_1^{\top} + \lambda \frac{\Delta}{\norm{\Delta}}\big) \\
		& =\bQ^{1/2}\ups^\top\ups \bQ^{1/2} + \lambda  \frac{W_1 \Delta}{\norm{\Delta}} +\lambda  \frac{(W_1 \Delta)^\top}{\norm{\Delta}} + \lambda^2 \frac{\Delta^\top \bA \Delta}{\norm{\Delta}^2}.
	\end{aligned}
	\ee
	Note that
	\begin{align*}
	& \norm{\lambda  \frac{W_1 \Delta}{\norm{\Delta}} +\lambda  \frac{(W_1 \Delta)^\top}{\norm{\Delta}} } \leq 2 \lambda \norm{W_1} \\
	& \norm{W_1} \leq \norm{\bQ}^{1/2} \norm{\bA}^{1/2},
	\end{align*}
	where the first inequality follows since  $\norm{\frac{\Delta}{\norm{\Delta}}} \leq 1$ and the second one since $\norm{\ups} \leq 1$. Hence,
	\be\label{eq:second-third-term-q-upper}
	\lambda  \frac{W_1 \Delta}{\norm{\Delta}} +\lambda  \frac{(W_1 \Delta)^\top}{\norm{\Delta}}  \preceq 2 \lambda \norm{\bQ}^{1/2} \norm{\bA}^{1/2} I_n.
	\ee
	Additionally, we have 
	\be\label{eq:fourth-term-q-upper}
	\lambda^2\frac{\Delta^\top \bA \Delta}{\norm{\Delta}^2} \preceq \lambda^2\norm{\bA} \frac{\Delta^\top \Delta}{\norm{\Delta}^2} \preceq \lambda^2\norm{\bA} I_n.
	\ee 
	Therefore, it follows from \eqref{eq:Zs-A-Zs-Q}-\eqref{eq:fourth-term-q-upper} and  $0 \leq \lambda \leq \gamma$ that
	$$
	\bar{Z}^\top \bA \bar{Z} \preceq \bQ    + 2 \gamma \norm{\bA^{1/2}}\norm{\bQ^{1/2}}I_n + \gamma^2 \norm{\bA}I_n = \bar{\bQ}.
	$$
	Thus $\calS \in \bar{\calC}_{o}$ and $C_o(\gamma) \in \bar{\calC}_{o}$. This completes the proof. \EP
	
	\bibliography{./MyReferences} 
\end{document}